\newcommand{\mmatrix}[1]{\ensuremath{#1}} 
\newcommand{\row}[2]{\ensuremath{{#1}_{#2}}}
\newcommand{\column}[2]{\row{#1}{\cdot,#2}}
\newcommand{\cell}[3]{\ensuremath{{#1}_{#2,#3}}}
\newcommand{\submmatrix}[3]{\ensuremath{{#1}_{#2,#3}}}
\newcommand{\rowadd}[3]{\ensuremath{#1 \succ_{#2} #3}}
\newcommand{\rowremove}[2]{\ensuremath{{#1}_{(#2,\cdot)^-}}}
\newcommand{\columnadd}[3]{\ensuremath{#1 \curlyvee_{#2} #3}}
\newcommand{\columnremove}[2]{\ensuremath{{#1}_{(\cdot, #2)^-}}}
\newcommand{\rowmerge}[2]{\ensuremath{\mmatrix{#1} \oplus \mmatrix{#2}}}
\newcommand{\rowempty}{\mmatrix{0}\xspace}
\newcommand{\rowsmergelong}[2]{\ensuremath{\left({#1}_{#2}\right)^{\oplus}}}
\newcommand{\rowsmerge}[2]{\ensuremath{{#1}_{#2}^{\oplus}}}
\newcommand{\rowsmerged}[1]{\ensuremath{{#1}^{\oplus}}}
\newcommand{\mmatrix}[1]{\ensuremath{\boldsymbol{#1}}} 
\newcommand{\row}[2]{\ensuremath{\boldsymbol{#1_{#2}}}}
\newcommand{\column}[2]{\row{#1}{\cdot,#2}}
\newcommand{\cell}[3]{\ensuremath{#1_{#2,#3}}}
\newcommand{\submmatrix}[3]{\ensuremath{\boldsymbol{#1_{#2,#3}}}}
\newcommand{\rowadd}[3]{\ensuremath{\boldsymbol{#1} \succ_{#2} \boldsymbol{#3}}}
\newcommand{\rowremove}[2]{\ensuremath{\boldsymbol{#1} _{(#2,\cdot)^-}}}
\newcommand{\columnadd}[3]{\ensuremath{\boldsymbol{#1} \curlyvee_{#2} \boldsymbol{#3}}}
\newcommand{\columnremove}[2]{\ensuremath{\boldsymbol{#1} _{(\cdot, #2)^-}}}
\newcommand{\rowmerge}[2]{\ensuremath{\mmatrix{#1} \oplus \mmatrix{#2}}}
\newcommand{\rowempty}{\mmatrix{0}\xspace}
\newcommand{\rowsmergelong}[2]{\ensuremath{\boldsymbol{\left(#1_{#2}\right)^{\oplus}}}}
\newcommand{\rowsmerge}[2]{\ensuremath{\boldsymbol{#1_{#2}^{\oplus}}}}
\newcommand{\rowsmerged}[1]{\ensuremath{\boldsymbol{#1^{\oplus}}}}
\newcommand{\colors}{\ensuremath{\mathcal{C}}\xspace}
\newcommand{\extensioncolors}{\ensuremath{\mathcal{K}}\xspace}
\newcommand{\grid}{\mmatrix{G}\xspace}
\newcommand{\ngrid}{\ensuremath{G}\xspace}
\newcommand{\extensions}[1]{\ensuremath{\Gamma(#1)}}
\newcommand{\extensionsrestricted}[2]{\ensuremath{\Gamma_{#1}(#2)}}
\newcommand{\extension}{\mmatrix{H}\xspace}
\newcommand{\nextension}{\ensuremath{H}\xspace}
\newcommand{\oextensions}[1]{\ensuremath{\Gamma^*(#1)}}
\newcommand{\oextensionsrestricted}[2]{\ensuremath{\Gamma^*_{#1}(#2)}}
\newcommand{\oextension}{\mmatrix{H^*}\xspace}
\newcommand{\cornerscolor}[2]{\ensuremath{\Delta_{#1}(#2)}}
\newcommand{\corners}[1]{\ensuremath{\Delta(#1)}}
\newcommand{\cornerscolorwo}[2]{\ensuremath{\Delta_{#1}^{-}(#2)}}
\newcommand{\cornerswo}[1]{\ensuremath{\Delta^{-}(#1)}}
\newcommand{\cornersrow}[2]{\ensuremath{\overline{\Delta}(#1, #2)}}
\newcommand{\cornersrowcolor}[3]{\ensuremath{\overline{\Delta_{#1}}(#2, #3)}}
\definecolor{blueCellColor}{RGB}{0,102,202}
\definecolor{redCellColor}{RGB}{237,0,0}
\newcommand{\cellred}{{\ensuremath{\color{redCellColor}{r}}}\xspace}
\newcommand{\cellblue}{{\ensuremath{\color{blueCellColor}{b}}}\xspace}
\newcommand{\bigO}[1]{\ensuremath{\mathcal{O}}(#1)}
\newcommand{\np}{\textsf{NP}}
\newcommand{\fpt}{\textsf{FPT}}
\newcommand{\xp}{\textsf{XP}}
\newcommand{\prob}[6]{%
  \needspace{3\baselineskip}
    \begin{description}
      \setlength\topsep{-.15ex} \setlength\itemsep{-.2ex}
    \item[#1]
    \item[#2]#3
    \item[#4]#5
    \end{description}
}
\newcommand{\probname}[1]{{\normalfont\textsc{#1}}}
\newcommand{\mypar}[1]{\smallskip\noindent{\bfseries\boldmath#1}}
\newcommand{\MinCornerComplexity}{\probname{MinCorner Complexity}\xspace}
\begin{document}
\title{Minimizing Corners in Colored Rectilinear Grids}
%
%
\author{Thomas Depian\inst{1} \and
Alexander Dobler\inst{1} \and 
Christoph Kern\inst{1} \and
Jules Wulms\inst{2}
}

\authorrunning{T. Depian et al.}

\institute{TU Wien, Vienna, Austria\\
\email{e\{11807882|11904675\}@student.tuwien.ac.at, adobler@ac.tuwien.ac.at}
\and
TU Eindhoven, Eindhoven, Netherlands\\
\email{j.j.h.m.wulms@tue.nl}
}
\maketitle              

\begin{abstract}
Given a rectilinear grid~\grid, in which cells are either assigned a single color, out of $k$ possible colors, or remain white, can we color white grid cells of~\grid to minimize the total number of corners of the resulting colored rectilinear polygons in~\grid? We show how this problem relates to hypergraph visualization, prove that it is \np-hard even for $k=2$, and present an exact dynamic programming algorithm. Together with a set of simple kernelization rules, this leads to an \fpt-algorithm in the number of colored cells of the input. We additionally provide an \xp-algorithm in the solution size, and a polynomial $\bigO{OPT}$-approximation algorithm.
\keywords{Shape complexity \and Rectilinear polygons \and Set visualization}
\end{abstract}
\section{Introduction}
Hypergraphs are a prominent way of modeling set systems. In a hypergraph, vertices correspond to set elements and hyperedges represent sets. To gain insight into the structure of hypergraphs, many hypergraph visualizations have been developed. These visualizations can be roughly subdivided into area-based visualizations~\cite{DBLP:journals/tvcg/MeulemansRSAD13,DBLP:journals/tvcg/RottmannWBGNH23,DBLP:journals/tvcg/WangCWZZHD22}, resembling the traditional Euler and Venn diagrams~\cite{baron1969note}, edge-based techniques~\cite{DBLP:journals/tvcg/AlperRRC11,DBLP:journals/tvcg/JacobsenWKN21,DBLP:journals/tvcg/MeulemansRSAD13}, where set elements are connected by links, and matrix-based approaches~\cite{DBLP:journals/tochi/RodgersSC15,DBLP:journals/tvcg/WallingerDN23}, in which columns and rows represent vertices and hyperedges. The surveys by Alsallakh et al.~\cite{DBLP:journals/cgf/AlsallakhMAHMR16} and Fischer et al.~\cite{DBLP:conf/visualization/FischerFKS21} consider state-of-the-art set visualizations and their classification in more detail.

Most area- and edge-based hypergraph visualizations represent the vertices as points in the plane, and visualize hyperedges as either regions or connections, respectively. These hyperedges usually intersect at common vertices to convey membership, while other intersections are considered to violate \emph{planarity}~\cite{DBLP:journals/jgt/JohnsonP87}, a well-established quality criterion for graph drawings~\cite{DBLP:journals/vlc/Purchase02}. 

Other visualization techniques completely prevent visual intersections between hyperedge representations, such as the grid-based visualization introduced by van Goethem et al.~\cite{DBLP:conf/gd/GoethemKKMSW17}. In this visual encoding, hyperedges are represented by disjoint (connected) polygons and each vertex corresponds to a cell in a rectilinear grid. Membership is conveyed by overlap of such a polygon with a grid cell representing a vertex. In their setting, van Goethem et al. allow each hyperedge to overlap only those grid cells corresponding to incident vertices and all such cells must be overlapped, see Figures~\ref{fig:grid-example}a and~\ref{fig:grid-example}b. Their input consists of a grid, in which each grid cell is assigned a subset of colors, and van Goethem et al. show how to test whether a disjoint polygon representation of two hyperedges can be realized, given such a grid and an assignment of colors to grid cells. They also prove that, if such a representation exists, the complexity within each grid cell can be bounded. Here, complexity refers to how many times a grid cell is intersected by hyperedge polygons.

\begin{figure}[t]
    \centering
    \includegraphics[page=15]{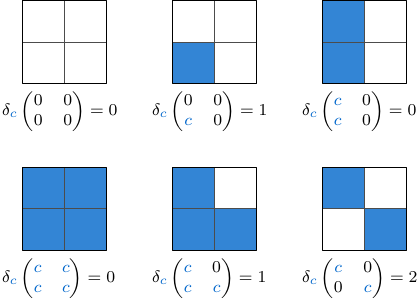}
    \caption{(a) A grid with colors assigned to grid cells as in~\cite{DBLP:conf/gd/GoethemKKMSW17}. (b) A disjoint polygon visualization for Fig.~\ref{fig:grid-example}a. (c) A subdivision of the grid in Fig.~\ref{fig:grid-example}a; colors assigned in Fig.~\ref{fig:grid-example}a are now assigned to a single cell after subdivision. (d) A coloring of Fig.~\ref{fig:grid-example}c that minimizes the number of corners of the colored polygons.}
    \label{fig:grid-example}
\end{figure}

In this paper, we further study these disjoint polygon hypergraph visualizations. A question left open in~\cite{DBLP:conf/gd/GoethemKKMSW17} is whether it can be beneficial to make use of empty grid cells, not assigned to any hyperedge: Can coloring such cells allow for more grids to result in valid visualizations, or can coloring these white cells reduce the visual complexity? We work towards answering the latter question, and to do so, we weaken multiple requirements with respect to the original setting. First, we allow hyperedge polygons to also overlap white/empty grid cells not assigned to any set. Second, we consider more than two sets, namely any constant number $k\geq 2$. And third, we no longer require that each set is represented by a single (connected) polygon. The former two adaptations create a more general problem, in which unused grid cells can be used as well. To deal with more than two sets, we allow each hyperedge representation to be broken up into multiple polygons via the latter adaptation. These changes allow us to represent more hypergraphs in a grid-based disjoint polygon visualization, even in the restricted case of only two hyperedges.

Thus, we study the \emph{shape complexity} of polygons in the visualization: With the intent of simplifying the polygons representing sets, and thereby reducing the visual complexity, we try to minimize \emph{the number of corners of the polygons}, and call this problem \MinCornerComplexity.

\mypar{Adapting the hypergraph visualization of~\cite{DBLP:conf/gd/GoethemKKMSW17} to our setting.}
As input, we consider a grid in which each cell is assigned to at most one hyperedge. We say that cells assigned to a hyperedge are \emph{colored}. This does not directly correspond to the original setting of~\cite{DBLP:conf/gd/GoethemKKMSW17}, in which each cell represents a hypergraph vertex incident with potentially multiple hyperedges. However, our input can be obtained by subdividing the grid of the original setting, and assigning each hyperedge incident with a vertex to a distinct subcell (see Figure~\ref{fig:grid-example}c and~\ref{fig:grid-example}d). This leads to two subproblems: finding a good/optimal assignment of colors to grid cells in the subdivision, and, given the assignment of colors to grid cells, coloring the grid to optimize the number of corners (which we called \MinCornerComplexity). In this paper, we consider only the latter subproblem, and we leave the former question as an open problem. Notice that, especially in sparsely colored grids, solving \MinCornerComplexity will likely lead to few polygons per hyperedge, similar to the goal of the original setting: A single polygon often has fewer corners than the sum of corners of the colored grid cells it encompasses. Thus, we tangentially still work towards colored grids with few polygons, sometimes even achieving the goal of the original setting.

\mypar{Contributions.} We formally define \MinCornerComplexity in Section~\ref{sec:preliminaries} and introduce the necessary terminology to work towards our results. In Section~\ref{sec:np-hardness} we show that the decision version of \MinCornerComplexity is \np-complete. Section~\ref{sec:algorithms} presents an exact dynamic programming algorithm with exponential running time and a polynomial-time $\bigO{OPT}$-approximation algorithm. This approximation is closer to optimal when the optimum is small. By introducing a set of simple kernelization rules, we show that \MinCornerComplexity is fixed-parameter tractable with respect to the number of colored cells, and give an \xp-algorithm with respect to the number of corners in the solution in Section~\ref{sec:param-complexity}. We conclude the paper with future research directions in Section~\ref{sec:conclusion}.

\section{Preliminaries}
\label{sec:preliminaries}
We denote the set $\{1, 2, \dots, n\}$ by $[n]$ and $\{n, n+1, \dots, m\}$ by $[n, m]$.
Let $\colors = [k]$ be a set of $k$ (non-white) \emph{colors}, encoded as integers.
We define the color $0$ to represent white and denote with $\colors_0 := \colors \cup \{0\}$ the set of colors including white.
A matrix $\grid \in \colors_0^{m \times n}$ represents a colored \emph{$m \times n$ grid}, which we call a \emph{coloring}.
With $\grid_{i,\cdot}$, $i \in [m]$, we will address the $i$-th row of the coloring, with $\grid_{\cdot,j}$ the $j$-th column, and with \cell{\ngrid}{i}{j}, $i \in [m]$, $j \in [n]$ one of its cells.
Since we often address rows, we will use \row{\grid}{i} if there is no risk of confusion.
To address all rows from $i$ to $j$, $i \leq j$, in \grid, we use \row{\grid}{[i,j],\cdot}, and analogously for columns \column{\grid}{[i,j]}, and to access a sub-grid we write \submmatrix{\grid}{[i_1, i_2]}{[j_1, j_2]}.
Throughout this paper, we implicitly assume that $i$ and $j$ are in the correct domain.
We call a cell \cell{\ngrid}{i}{j} \emph{colored} if $\cell{\ngrid}{i}{j} \neq 0$, and otherwise we say that it is \emph{white}.
Let~\extensioncolors be a non-empty subset of $\colors_0$. A coloring $\extension$ is a valid \emph{\extensioncolors-extension} of \grid, if it respects the color of the colored cells in \grid, i.e.,~for all colored cells \cell{\ngrid}{i}{j} we have $\cell{\nextension}{i}{j} = \cell{\ngrid}{i}{j}$ and for the white cells we have $\cell{\nextension}{i}{j} \in \extensioncolors$.
If \extensioncolors is clear from context, it will be omitted.
We denote with $\extensionsrestricted{\extensioncolors}{\grid}$ the set of all valid \extensioncolors-extensions and use \extensions{\grid} as a shorthand for $\extensionsrestricted{\colors_0}{\grid}$.

\mypar{Problem description.}
As explained in the introduction, we aim to find extensions with few corners.
Roughly speaking, a corner is a $90^{\circ}$- or $270^{\circ}$-angled bend of a color in the coloring, which always occurs at a center point of a $2 \times 2$-region of the grid.
Let $\delta_c: \colors_0^{2 \times 2} \to \mathbb{Z}_0^+$ be a function that counts the number of corners of a color $c \in \colors$, i.e., the \emph{$c$-corners}, at the center of such a $2 \times 2$-region.
When counting $c$-corners, we can treat all other cells as white. We observe the following distinct scenarios (disregarding symmetries) in a $2 \times 2$-region.
\begin{figure}[H]
        \centering
        \includegraphics[page=1]{chk_images.pdf}
\end{figure}
Above cases lead to the following definition of $\delta_c$.
\begin{align*}
    \delta_c
    \left(\mmatrix{C}\right)
    = \vert \cell{\mmatrix{C}}{1}{1}^{=c} + \cell{\mmatrix{C}}{2}{2}^{=c} - \cell{\mmatrix{C}}{1}{2}^{=c} - \cell{\mmatrix{C}}{2}{1}^{=c} \vert,\
    \text{where}\ \cell{\mmatrix{C}}{i}{j}^{=c} := 
    \begin{cases}
        1 & \text{if}\ \cell{C}{i}{j} = c\\
        0 & \text{otherwise}
    \end{cases}
\end{align*}

In order to count all corners of a color $c$ on the entire grid, we iterate over all $2 \times 2$-regions of the grid and sum up the number of corners of $c$.
To ensure we do not miss corners on the boundary of the grid, we enlarge the $m \times n$ grid \grid in each direction by a row/column of white cells, resulting in the $(m + 2) \times (n + 2)$ grid $\grid^P$.
This is equivalent to initializing $\grid^P$ as a white $(m + 2) \times (n + 2)$ grid and setting $\submmatrix{\grid^P}{[1, m]}{[1, n]} = \grid$.
Observe that the added white rows/columns are in the row/column with index 0 and $m + 1$/$n + 1$.
We use this slight abuse of notation to ease our arguments, as this preserves the row and column indices of $\grid$ in $\grid^P$.
We denote by $\mmatrix{g}^P$ the analogous operation for a row \mmatrix{g}: We add one white cell left and right of \mmatrix{g}, such that $\mmatrix{g}^P = \left(\grid^P\right)_i$ for appropriate \grid and $i$.

Let $\Delta_c: \colors_0^{m \times n} \to \mathbb{Z}_0^+$ be a function that counts all corners of color $c \in\colors$ of a coloring~$\grid$ of the $m \times n$ grid.
Formally, we can define $\Delta_c$ as follows:
\begin{align*}
    \cornerscolor{c}{\grid} = \sum_{i = 0}^{m}\sum_{j = 0}^{n} \delta_c\left(\submmatrix{\grid^P}{[i, i+ 1]}{[j, j+1]}\right).
\end{align*}
The total number of corners, $\Delta: \colors_0^{m \times n} \to \mathbb{Z}_0^+$, is the sum of all non-white corners, i.e.,~$\corners{\grid} = \sum_{c \in \colors} \cornerscolor{c}{\grid}$.
To count the number of corners, of a particular color or in total, between two (consecutive) rows~$g$ and~$h$, we use \cornersrowcolor{c}{g}{h} and \cornersrow{g}{h}, respectively.
With this we can formally define \MinCornerComplexity, with optimal extensions $\oextensions{\grid}$, and its decision variant \probname{Corner Complexity}.

\begin{definition}
    \label{def:minimize-complexity}
    \prob{\MinCornerComplexity}
    {Given:}
    {A set~\colors\ of $k$ colors and a colored grid $\grid \in \colors_0^{m \times n}$.}
    {Compute:}
    {Extension~$\oextension \in \oextensions{\grid}$ s.t.~$\corners{\oextension} \leq \corners{\extension'}$ for all $\extension' \in \extensions{\grid}$.}{}
\end{definition}

\begin{definition}\label{def:corner-complexity}
    \prob{\probname{Corner Complexity}}
    {Given:}
    {A set~\colors\ of $k$ colors, a colored grid $\grid \in \colors_0^{m \times n}$, and an integer $\ell$.}
    {Question:}
    {Does there exist an extension $\extension \in \extensions{\grid}$ with~$\corners{\extension} \leq \ell$?}{}
\end{definition}

Before we present our main results, we discuss some useful properties of $\Delta$.

\mypar{Adding, removing, and merging rows.}
Since colorings are defined on a grid, they can be seen as (integer) matrices.
Therefore, it is natural to define operations that modify the structure of the underlying grid rather than its colors.
In particular, for a coloring \grid, we can \emph{insert} ($\succ$/$\curlyvee$) some other colored row/column~$\mmatrix{g}$ at row/column~$i$, or \emph{remove} ($-$) row/column~$i$ from~\grid:
\begin{align*}
    \rowadd{\grid}{i}{g} &:=
    \begin{pmatrix}
    \row{\grid}{[1, i-1],\cdot}\\
    \mmatrix{g}\\
    \row{\grid}{[i, m],\cdot}
    \end{pmatrix},
    \quad\quad&\quad\quad
    \columnadd{\grid}{i}{g} &:=
    \begin{pmatrix}
    \column{\grid}{[1, i-1]} & \mmatrix{g} & \column{\grid}{[i, m]}
    \end{pmatrix},\\
    \rowremove{\grid}{i} &:=
    \begin{pmatrix}
    \row{\grid}{[1, i-1],\cdot}\\
    \row{\grid}{[i + 1, m],\cdot}
    \end{pmatrix},
    \quad\quad&\quad\quad
    \columnremove{\grid}{i} &:=
    \begin{pmatrix}
    \column{\grid}{[1, i-1]} & \column{\grid}{[i + 1, m]}
    \end{pmatrix}.
\end{align*}
\begin{restatable}{lemma}{addRowNotDecreaseCornersLemma}
    \label{lem:add-row-not-decrease-corners}
    Let $\grid \in \colors_0^{m \times n}$, $\mmatrix{g} \in \colors_0^{n}$, and $i \in [m]$, then $\corners{\grid} \leq \corners{\rowadd{\grid}{i}{g}}$.
\end{restatable}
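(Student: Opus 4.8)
The plan is to reduce the statement to a ``triangle inequality'' for the corners counted between pairs of consecutive rows. First I would record that, directly from the definitions of $\delta_c$, $\Delta_c$, and $\Delta$, the corner count decomposes additively over the horizontal ``seams'' of the grid:
\begin{align*}
    \corners{\grid} = \sum_{i=0}^{m} \cornersrow{\row{\grid^P}{i}}{\row{\grid^P}{i+1}},
\end{align*}
where $\row{\grid^P}{0}$ and $\row{\grid^P}{m+1}$ are the white padding rows of $\grid^P$ and $\cornersrow{g}{h}$ sums $\delta_c$ over all $2\times2$ windows straddling the seam between rows $g$ and $h$ (including the left/right padding columns), over all colors $c\in\colors$. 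This is just a regrouping of the double sum defining $\Delta_c$ by its outer index.

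Second, I would observe from the definition of $\rowadd{\grid}{i}{g}$ that inserting $\mmatrix{g}$ at position $i$ leaves every seam untouched except the seam between $\row{\grid}{i-1}$ and $\row{\grid}{i}$, which is split into the two seams $(\row{\grid}{i-1},\mmatrix{g})$ and $(\mmatrix{g},\row{\grid}{i})$. Hence
\begin{align*}
    \corners{\rowadd{\grid}{i}{g}} - \corners{\grid} = \cornersrow{\row{\grid}{i-1}}{\mmatrix{g}} + \cornersrow{\mmatrix{g}}{\row{\grid}{i}} - \cornersrow{\row{\grid}{i-1}}{\row{\grid}{i}},
\end{align*}
so it suffices to prove $\cornersrow{p}{s} \le \cornersrow{p}{q} + \cornersrow{q}{s}$ for any three rows $p,q,s$, applied with $p=\row{\grid}{i-1}$, $q=\mmatrix{g}$, $s=\row{\grid}{i}$; the case $i=1$ simply takes $p$ to be the white top padding row.

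The heart of the argument, and the step I expect to carry the whole proof, is that this triangle inequality already holds color-by-color and column-by-column, so it follows from the scalar triangle inequality. Fixing a color $c\in\colors$ and a vertical gridline at position $j$, I would write $X,Y,Z\in\{-1,0,1\}$ for the signed horizontal $c$-difference $\cell{\cdot}{}{j}^{=c}-\cell{\cdot}{}{j+1}^{=c}$ of the rows $p$, $q$, and $s$ at that gridline. Regrouping the formula for $\delta_c$ then shows that the three relevant windows contribute exactly $|X-Z|$ (for the $p,s$-seam), $|X-Y|$ (for $p,q$), and $|Y-Z|$ (for $q,s$). Since $|X-Z|\le|X-Y|+|Y-Z|$, the inequality holds at this window; summing over all gridlines $j$ and all colors $c$ yields the row-level inequality, which makes the displayed difference nonnegative and gives $\corners{\grid}\le\corners{\rowadd{\grid}{i}{g}}$.

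The conceptual content is thus minimal once the seam decomposition and the reduction to $|X-Z|\le|X-Y|+|Y-Z|$ are in place. The only genuinely delicate points are bookkeeping: verifying that the boundary padding of $\grid^P$ makes the seam decomposition count exactly the corners on the top, bottom, left, and right boundaries, and that the reindexing of seams after the insertion is accounted for correctly. Neither affects the scalar inequality at the core, so I do not anticipate any real obstacle beyond getting these index and padding conventions right.
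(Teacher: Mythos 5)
Your proposal is correct and follows essentially the same route as the paper's proof: both decompose the corner count over horizontal seams, observe that inserting a row replaces one seam by two, and reduce to a per-color, per-window triangle inequality $\cornersrowcolor{c}{p}{s} \le \cornersrowcolor{c}{p}{q} + \cornersrowcolor{c}{q}{s}$ proved via the scalar triangle inequality. Your formulation with signed horizontal differences $X,Y,Z$ and $|X-Z|\le|X-Y|+|Y-Z|$ is just a cleaner packaging of the paper's ``adding zero'' manipulation of the $\delta_c$ expressions.
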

\begin{proof}
    By definition of $\Delta$, to prove $\corners{\grid} \leq \corners{\rowadd{\grid}{i}{g}}$ we have to consider the corners of every $2 \times 2$ grid in~$G^P$, counted using~$\delta_c$, for all $c\in\colors$.
    
    Let $\mmatrix{f}^P = \row{\left(\grid^P\right)}{i-1}$ be the row before row $i$ in $\grid^P$ and let $\mmatrix{h}^P = \row{\left(\grid^P\right)}{i}$ be the current row at index $i$ in \grid. Observe that by inserting \mmatrix{g} at row $i$, we exactly remove the corners between \mmatrix{f} and \mmatrix{h} and add the corners between \mmatrix{f} and \mmatrix{g}, and \mmatrix{g} and \mmatrix{h}. The corners between all other rows of $\grid$ stay the same. Thus we need to show that $\cornersrow{\mmatrix{f}}{\mmatrix{h}} \le \cornersrow{\mmatrix{f}}{\mmatrix{g}} + \cornersrow{\mmatrix{g}}{\mmatrix{h}}$. We strengthen the argument by showing that $\cornersrowcolor{c}{\mmatrix{f}}{\mmatrix{h}} \le \cornersrowcolor{c}{\mmatrix{f}}{\mmatrix{g}} + \cornersrowcolor{c}{\mmatrix{g}}{\mmatrix{h}}$ for any color $c \in \colors$. 
    Furthermore, we restrict ourselves to corners between two arbitrary consecutive columns $x$ and $y$ of $\grid^P$. This allows us to directly utilize the corner counting function~$\delta_c$ for a $2\times2$-region. Let $\mmatrix{\hat{f}} = \mmatrix{f}^P$, $\mmatrix{\hat{g}} = \mmatrix{g}^P$, and $\mmatrix{\hat{h}} = \mmatrix{h}^P$. We now aim to show the following.
    \begin{align*}
        \delta_c\begin{pmatrix} \row{\mmatrix{\hat{f}}}{[x,y]} \\ \row{\mmatrix{\hat{h}}}{[x,y]} \end{pmatrix}
        \le \delta_c\begin{pmatrix} \row{\mmatrix{\hat{f}}}{[x,y]} \\ \row{\mmatrix{\hat{g}}}{[x,y]}\end{pmatrix} 
        + \delta_c\begin{pmatrix} \row{\mmatrix{\hat{g}}}{[x,y]} \\ \row{\mmatrix{\hat{h}}}{[x,y]} \end{pmatrix}
    \end{align*}
    We observe that the inequality holds by using the trick of ``adding zero'' (Line~\ref{eq:remove-row-adding-zeros}) together with triangle inequality (for absolute values; Line~\ref{eq:remove-row-triangle-inequality}).
    \begin{align}
        \delta_c\begin{pmatrix} \row{\mmatrix{\hat{f}}}{[x,y]} \\ \row{\mmatrix{\hat{h}}}{[x,y]} \end{pmatrix} &=
        |\mmatrix{\hat{f}}_x^{=c} + \mmatrix{\hat{h}}_y^{=c} - \mmatrix{\hat{f}}_y^{=c} - \mmatrix{\hat{h}}_x^{=c}|\\ &=|\mmatrix{\hat{f}}_x^{=c} + \mmatrix{\hat{g}}_y^{=c} - \mmatrix{\hat{f}}_y^{=c} - \mmatrix{\hat{g}}_x^{=c} + \mmatrix{\hat{g}}_x^{=c} + \mmatrix{\hat{h}}_y^{=c} - \mmatrix{\hat{g}}_y^{=c} - \mmatrix{\hat{h}}_x^{=c}|
        \label{eq:remove-row-adding-zeros}\\
        &=
        |(\mmatrix{\hat{f}}_x^{=c} + \mmatrix{\hat{g}}_y^{=c} - \mmatrix{\hat{f}}_y^{=c} - \mmatrix{\hat{g}}_x^{=c}) + (\mmatrix{\hat{g}}_x^{=c} + \mmatrix{\hat{h}}_y^{=c} - \mmatrix{\hat{g}}_y^{=c} - \mmatrix{\hat{h}}_x^{=c})|
        \\
        &\le 
        |\mmatrix{\hat{f}}_x^{=c} + \mmatrix{\hat{g}}_y^{=c} - \mmatrix{\hat{f}}_y^{=c} - \mmatrix{\hat{g}}_x^{=c}| + |\mmatrix{\hat{g}}_x^{=c} + \mmatrix{\hat{h}}_y^{=c} - \mmatrix{\hat{g}}_y^{=c} - \mmatrix{\hat{h}}_x^{=c}|
        \label{eq:remove-row-triangle-inequality}\\
        &=
        \delta_c\begin{pmatrix} \row{\mmatrix{\hat{f}}}{[x,y]} \\ \row{\mmatrix{\hat{g}}}{[x,y]}\end{pmatrix} 
        + \delta_c\begin{pmatrix} \row{\mmatrix{\hat{g}}}{[x,y]} \\ \row{\mmatrix{\hat{h}}}{[x,y]} \end{pmatrix}
    \end{align}
    Therefore, $\cornersrowcolor{c}{\mmatrix{f}}{\mmatrix{h}} \le \cornersrowcolor{c}{\mmatrix{f}}{\mmatrix{g}} + \cornersrowcolor{c}{\mmatrix{g}}{\mmatrix{h}}$ must hold for any color $c \in \colors$. Consequently, $\corners{\grid} \leq \corners{\rowadd{\grid}{i}{g}}$.\qed
\end{proof}
\begin{restatable}{lemma}{removeRowNotIncreaseCornersLemma}
    \label{lem:remove-row-not-increase-corners}
    Let $\grid \in \colors_0^{m \times n}$ and $i \in [m]$, then $\corners{\rowremove{\grid}{i}} \leq \corners{\grid}$.
\end{restatable}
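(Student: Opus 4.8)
The plan is to obtain this lemma essentially for free from Lemma~\ref{lem:add-row-not-decrease-corners}, by observing that deleting a row is the exact inverse of inserting one. Concretely, I would set $\mmatrix{g} := \row{\grid}{i}$ and $\grid' := \rowremove{\grid}{i}$, and then argue that reinserting $\mmatrix{g}$ at position~$i$ into the reduced coloring recovers the original grid, i.e.,~$\rowadd{\grid'}{i}{g} = \grid$. Applying Lemma~\ref{lem:add-row-not-decrease-corners} to $\grid'$ and $\mmatrix{g}$ then yields $\corners{\rowremove{\grid}{i}} = \corners{\grid'} \le \corners{\rowadd{\grid'}{i}{g}} = \corners{\grid}$, which is precisely the claim.

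The only real work is verifying the identity $\rowadd{\grid'}{i}{g} = \grid$, which is pure index bookkeeping against the definitions of the row-removal and row-insertion operators. By definition, $\grid'$ consists of the blocks $\row{\grid}{[1,i-1],\cdot}$ followed by $\row{\grid}{[i+1,m],\cdot}$, so $\grid'$ has $m-1$ rows, with its first $i-1$ rows equal to those of~$\grid$ and its rows $i,\dots,m-1$ equal to the original rows $i+1,\dots,m$. Inserting $\mmatrix{g}$ at position~$i$ stacks $\row{\grid'}{[1,i-1],\cdot}$, then $\mmatrix{g} = \row{\grid}{i}$, then $\row{\grid'}{[i,m-1],\cdot}$; substituting the blocks of~$\grid'$ reassembles exactly $\row{\grid}{[1,i-1],\cdot}$, $\row{\grid}{i}$, $\row{\grid}{[i+1,m],\cdot}$, i.e.,~$\grid$.

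I expect the main (and only) obstacle to be handling the boundary cases of the index range cleanly, so that the argument is uniform. When $i=1$ the block $\row{\grid'}{[1,i-1],\cdot}$ is empty, and when $i=m$ the block $\row{\grid'}{[i,m-1],\cdot}$ is empty; in both situations the reassembly still produces~$\grid$, so the reduction goes through without a separate case analysis. Since $i \in [m]$ guarantees a well-defined insertion position for the reduced grid, no further assumptions are needed, and the inequality follows directly from the previously established insertion lemma.
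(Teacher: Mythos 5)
Your proposal is correct and matches the paper's own proof essentially verbatim: both derive the claim as a direct corollary of Lemma~\ref{lem:add-row-not-decrease-corners} by setting $\mmatrix{g} = \row{\grid}{i}$, $\grid' = \rowremove{\grid}{i}$, and using $\rowadd{\grid'}{i}{g} = \grid$. Your extra index bookkeeping (including the $i=1$ and $i=m$ boundary cases) is a sound, if more detailed, verification of the identity the paper treats as immediate.
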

\begin{proof}
    This is a direct consequence of Lemma~\ref{lem:add-row-not-decrease-corners}: For some $\grid \in \colors_0^{m \times n}$ let $g = \row{\grid}{i}$ and let $\grid' = \rowremove{\grid}{i}$, then by Lemma~\ref{lem:add-row-not-decrease-corners} we know that $\corners{\grid'} \leq \corners{\rowadd{\grid'}{i}{g}}$, which is equivalent to $\corners{\rowremove{\grid}{i}} \leq \corners{\grid}$.\qed
\end{proof}
Lemmata analogous to Lemma~\ref{lem:add-row-not-decrease-corners} and~Lemma~\ref{lem:remove-row-not-increase-corners} can be proved for columns.

Finally, one can also \emph{merge} ($\oplus$) the colorings of two adjacent rows (see Figure~\ref{fig:merge-rows}).
The merge operator $\oplus$ for row colorings $\mmatrix{g}, \mmatrix{h} \in \colors_0^{n}$ is defined as
\begin{align*}
    (\rowmerge{g}{h})_i := \begin{cases}
        g_i & \text{if } g_i = h_i \lor h_i = 0 \\
        h_i & \text{if } g_i = 0 \\
    \end{cases}, \text{ for } i \in [n].
\end{align*}
\begin{figure}[t]
    \centering
    \includegraphics[page = 13,width=\linewidth]{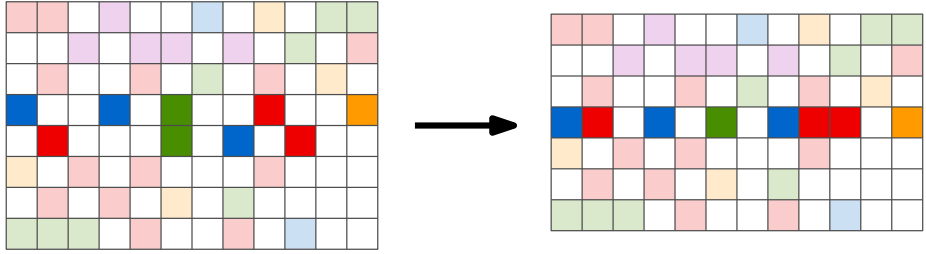}
    \caption{The result of merging the two highlighted rows.}
    \label{fig:merge-rows}
\end{figure}
Observe that \rowmerge{g}{h} is undefined if we have $g_i \neq h_i$ and neither of $g_i$ or $h_i$ is $0$, for some $i \in [n]$.
In that case, we set $\rowmerge{g}{h} = \bot$. Additionally, we define that $\rowmerge{g}{\bot} = \rowmerge{\bot}{g} = \bot$, and $\corners{\bot} = \infty$, and observe the following property.
\begin{property}
    $\oplus$ is commutative and associative; the white row \rowempty is the identity.
\end{property}
Let \rowsmergelong{\grid}{[i,j]}, abbreviated as \rowsmerge{\grid}{[i,j]}, denote the row coloring after consecutively merging rows $i$ to $j$ of \grid.
We use \rowsmerged{\grid} as a shorthand for \rowsmerge{\grid}{[1,m]} and define 
\begin{align*}
    \rowsmerge{\grid}{[i,j]} := \begin{cases}
        {\row{\grid}{i}} & \text{if } i = j \\
        \rowmerge{\rowsmerge{\grid}{[i,j - 1]}}{\row{\grid}{j}} & \text{otherwise}.
    \end{cases}
\end{align*}

\mypar{Bounds on the number of corners.}
While finding a minimum-corner extension of a grid \grid is \np-hard (see Section~\ref{sec:np-hardness}), we can prove bounds on the number of corners between two consecutive rows and in (extensions of) \grid in general.

First, it is easy to see that there are no corners between two identical row colorings.
However, distinct rows have at least 2 corners between them.
\begin{property}
    \label{prop:no-corners-between-two-identical-rows}
    For a row coloring $\mmatrix{g} \in \colors_0^n$ it holds that $\cornersrow{\mmatrix{g}}{\mmatrix{g}} = 0$.
\end{property}
\begin{restatable}{lemma}{twoCornersBetweenTwoRowsLemma}
    \label{lem:2_corners_between_two_rows}
    For a coloring $\grid \in \colors_0^{m \times n}$, if $\row{\grid}{i} \neq \row{\grid}{i + 1}$ then $\cornersrow{\row{\grid}{i}}{\row{\grid}{i + 1}} \geq 2$.
\end{restatable}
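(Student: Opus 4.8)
The plan is to reduce the statement to a single, well-chosen color and to recognize the number of $c$-corners between two rows as the total variation of a difference sequence. Write $g = \row{\grid}{i}$ and $h = \row{\grid}{i+1}$, and let $\mmatrix{\hat{g}} = \mmatrix{g}^P$ and $\mmatrix{\hat{h}} = \mmatrix{h}^P$ be the two rows padded with a white cell at indices $0$ and $n+1$. Fixing a color $c \in \colors$ and defining $d_j := \mmatrix{\hat{g}}_j^{=c} - \mmatrix{\hat{h}}_j^{=c} \in \{-1,0,1\}$ for $j \in [0,n+1]$, I would sum $\delta_c$ over all consecutive column pairs and rearrange the argument of each absolute value (the same ``adding zero''/regrouping used in the proof of Lemma~\ref{lem:add-row-not-decrease-corners}) into $(\mmatrix{\hat{g}}_j^{=c} - \mmatrix{\hat{h}}_j^{=c}) - (\mmatrix{\hat{g}}_{j+1}^{=c} - \mmatrix{\hat{h}}_{j+1}^{=c})$, obtaining
\[
    \cornersrowcolor{c}{g}{h} = \sum_{j=0}^{n} |d_j - d_{j+1}|.
\]
Since the padding makes the boundary cells white, this sequence satisfies $d_0 = d_{n+1} = 0$.

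Next I would use the hypothesis $g \neq h$ to pick a column $j^* \in [n]$ with $g_{j^*} \neq h_{j^*}$. At least one of these two cells is non-white; let $c$ be its color. Then exactly one of $\mmatrix{\hat{g}}_{j^*}^{=c}$ and $\mmatrix{\hat{h}}_{j^*}^{=c}$ equals $1$, so $|d_{j^*}| = 1$. Splitting the sum at $j^*$ and applying the triangle inequality to each partial (telescoping) sum yields
\[
    \sum_{j=0}^{n} |d_j - d_{j+1}| \geq |d_{j^*} - d_0| + |d_{n+1} - d_{j^*}| = 2|d_{j^*}| = 2,
\]
so $\cornersrowcolor{c}{g}{h} \geq 2$ for this color. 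As $\cornersrow{g}{h} = \sum_{c \in \colors} \cornersrowcolor{c}{g}{h}$ is a sum of non-negative terms, the desired bound $\cornersrow{g}{h} \geq 2$ follows immediately.

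I expect no genuine obstacle here, only bookkeeping: verifying that the regrouping reproduces exactly the sum of the $\delta_c$-terms, and confirming that the white padding forces $d_0 = d_{n+1} = 0$, so that a single nonzero interior value of $d$ must be both reached from and returned to $0$, each transition contributing at least one unit of variation. The one conceptual point worth flagging is that it suffices to analyze a single color, chosen as the color of a cell where the two rows differ; this avoids any case distinction over how the colors change across the row and makes the triangle-inequality estimate completely elementary.
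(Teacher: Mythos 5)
Your proof is correct, but it takes a genuinely different route from the paper's. The paper argues locally: it picks the \emph{leftmost} column $j_\ell$ and the \emph{rightmost} column $j_r$ at which the two rows differ, observes that each of the two $2\times 2$ regions spanning columns $[j_\ell-1, j_\ell]$ and $[j_r, j_r+1]$ contributes at least one corner (summed over all colors), and then notes that these two regions are distinct because $j_\ell \leq j_r$ --- including the slightly delicate case $j_\ell = j_r$, where the regions overlap but are not identical. You instead commit to a single color $c$ (the color of a non-white cell at one differing column $j^*$), rewrite $\cornersrowcolor{c}{g}{h}$ as the total variation $\sum_{j=0}^{n}|d_j - d_{j+1}|$ of the indicator-difference sequence --- an identity that is indeed just a regrouping of the paper's $\delta_c$, no ``adding zero'' even required --- and then exploit the boundary conditions $d_0 = d_{n+1} = 0$ forced by the white padding via telescoping and the triangle inequality; the split at $j^*$ is valid since $1 \leq j^* \leq n$ guarantees both partial sums are nonempty. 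Your approach buys two things: a per-color statement ($\cornersrowcolor{c}{g}{h} \geq 2$ for a \emph{single} color, marginally stronger than the paper's bound on the sum over colors), and a clean reusable identity that sidesteps the overlapping-regions subtlety entirely. The paper's approach buys brevity and locality: two explicit witness regions, no global identity needed.
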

\begin{proof}
    Let column $j_{\ell}$ be the first column from the left where $\cell{\ngrid}{i}{j_{\ell}} \neq \cell{\ngrid}{i + 1}{j_{\ell}}$.
    Since $\row{\grid}{i} \neq \row{\grid}{i + 1}$, such a column must exist. Let $c' = \cell{\ngrid}{i}{j_{\ell}}$ and $c'' = \cell{\ngrid}{i + 1}{j_{\ell}}$ be the respective colors.
    As $c' \neq c''$, we know by the definition of $\delta_c$ that
    \begin{align*}
        \sum_{c \in \colors} 
        \delta_c\left(\submmatrix{\ngrid^P}{[i, i + 1]}{[j_{\ell} - 1, j_{\ell}]}\right)
        \geq 1.
    \end{align*}
    Now, let $j_r$ be the first column from the right where $\cell{\ngrid}{i}{j_r} \neq \cell{\ngrid}{i + 1}{j_r}$. Again, we observe
    \begin{align*}
        \sum_{c \in \colors} 
        \delta_c\left(\submmatrix{\ngrid^P}{[i, i + 1]}{[j_r, j_r + 1]}\right)
        \geq 1.
    \end{align*}
    Since it must hold that $j_\ell \leq j_r$, the two $2\times2$-regions are not identical, but they might overlap, namely when $j_\ell = j_r$.
    Therefore, $\cornersrow{\row{\grid}{i}}{\row{\grid}{i + 1}}$ is at least 2 as per definition of $\overline{\Delta}$.\qed
\end{proof}
Second, as a consequence of Lemma~\ref{lem:add-row-not-decrease-corners}, we can bound the number of corners in $\grid$ from below by considering the number of corners within an arbitrary row. 

\begin{restatable}{lemma}{cornersInRowUpperBoundCornersLemma}
    \label{lem:corners-in-row-upper-bound-corners}
    For any coloring $\grid \in \colors_0^{m \times n}$ and row $\row{\grid}{i}$, $\corners{\row{\grid}{i}} \le \corners{\grid}$ holds.
\end{restatable}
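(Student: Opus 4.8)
The plan is to reduce $\grid$ to the single row $\row{\grid}{i}$ by deleting, one at a time, every row other than row $i$, and to observe that each such deletion can only decrease the total number of corners. This is precisely the statement of Lemma~\ref{lem:remove-row-not-increase-corners} (itself a consequence of Lemma~\ref{lem:add-row-not-decrease-corners}), applied repeatedly, so the whole argument is a telescoping chain of inequalities.

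Concretely, I would argue by induction on the number of rows $m$. If $m = 1$, then $\grid = \row{\grid}{i}$ and there is nothing to prove. For $m > 1$, pick any row index $i' \neq i$ and set $\grid' = \rowremove{\grid}{i'}$. By Lemma~\ref{lem:remove-row-not-increase-corners} we have $\corners{\grid'} \le \corners{\grid}$. The coloring $\grid'$ has $m - 1$ rows and still contains the (unchanged) row $\row{\grid}{i}$, so the induction hypothesis applied to $\grid'$ yields $\corners{\row{\grid}{i}} \le \corners{\grid'}$. Chaining the two inequalities gives $\corners{\row{\grid}{i}} \le \corners{\grid}$, as desired.

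The only point requiring a little care is bookkeeping: after a deletion the surviving rows are re-indexed, so the row originally at position $i$ may shift to a different position. This is purely cosmetic, since the coloring of that row is untouched and Lemma~\ref{lem:remove-row-not-increase-corners} permits deleting a row at any position; hence the re-indexing does not affect the chain. I would also note explicitly that once only row $i$ remains, the resulting $1 \times n$ grid is exactly $\row{\grid}{i}$, whose corner count is $\corners{\row{\grid}{i}}$ because $\Delta$ pads every grid with white rows and columns before counting, so the base case matches the quantity in the statement. I do not anticipate any genuine obstacle here: the substance of the lemma is already carried by Lemma~\ref{lem:remove-row-not-increase-corners}, and the remaining work is merely the straightforward iteration of $m - 1$ applications of it.
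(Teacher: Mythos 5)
Your proposal is correct and is essentially the paper's own argument run in reverse: the paper builds $\grid$ up from the single row $\row{\grid}{i}$ by repeatedly applying Lemma~\ref{lem:add-row-not-decrease-corners}, while you tear $\grid$ down to that row by repeatedly applying Lemma~\ref{lem:remove-row-not-increase-corners}, which the paper itself derives as an immediate corollary of the former. The two are the same telescoping argument, and your attention to re-indexing and to the base case is fine but not a substantive difference.
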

\begin{proof} 
    Trivially, a coloring consisting of just the row~$G_i$ has $\corners{\row{\grid}{i}}$ corners. By Lemma~\ref{lem:add-row-not-decrease-corners}, adding additional rows before and after $\mmatrix{\row{\grid}{i}}$ does not decrease the number of corners. Thus, we can construct \grid by iteratively adding rows to $\mmatrix{\row{\grid}{i}}$, which consequently will have at least $\corners{\mmatrix{\row{\grid}{i}}}$ corners. Thus, $\corners{\row{\grid}{i}} \le \corners{\grid}$.\qed
\end{proof}
Finally, we want to argue about the number of corners of a single row. Let $\eta: \colors_0^n \to \colors_0^n$ be a function that, for a given row coloring \mmatrix{g}, extends the coloring by doing one sweep over the cells of \mmatrix{g} from left to right and coloring each white cell the same color as the previous cell in the row. Additionally, we define $\eta(\bot) = \bot$. Intuitively, $\corners{\eta(\mmatrix{g})} \le \corners{\mmatrix{g}}$ holds since the number of colored rectangles in $\mmatrix{g}$ never increases, but might decrease when two rectangles of the same color merge. This property generalizes further: For a single row~$\mmatrix{g}$, $\eta(\mmatrix{g})$ is an optimal extension.

\begin{property}
    \label{prop:eta-corners-less-than-row-extension-corners}
    For any $\mmatrix{g} \in \colors_0^n$ and $\mmatrix{h} \in \extensions{\mmatrix{g}}$, it holds that  $\corners{\eta(\mmatrix{g})} \le \corners{\mmatrix{h}}$.
\end{property}

\section{Computational Complexity of \probname{Corner Complexity}}
\label{sec:np-hardness}
In this section, we show that \probname{Corner Complexity}, the decision variant of \MinCornerComplexity, is \np-complete.
Whilst \np-membership follows from the corner-counting function, we show \np-hardness using a series of reductions.
The base problem for our reduction is \probname{Restricted Planar Monotone 3-Bounded 3-SAT} (see Section~\ref{sec:restricted-planar-monotone-3-bounded-3-sat}), a variant of \probname{3-SAT}.
The centerpiece of this section is the reduction of the aforementioned problem to \probname{Restricted $c$-Corner Complexity}, a restricted variant of \probname{Corner Complexity} (Section~\ref{sec:restricted-c-corner-complexity}).
The final step is to reduce to \probname{Corner Complexity}.
The reduction effectively uses only two colors, $c$ and $c'$, which we sometimes call (\cellblue)lue and (\cellred)ed, respectively.

\subsection{\probname{Restricted Planar Monotone (RPM) 3-Bounded 3-SAT}}
\label{sec:restricted-planar-monotone-3-bounded-3-sat}
An instance of \probname{RPM 3-Bounded 3-SAT} is a monotone Boolean formula $\varphi$ in 3-CNF over variables $\mathcal{X} = \{x_1,\dots,x_n\}$: Each clause of $\varphi$ has only \emph{positive} or only \emph{negative} literals, forming the sets $\mathcal{P}$ and $\mathcal{N}$ of positive and negative clauses, respectively.
Furthermore, $\varphi$ is \emph{3-bounded}: each variable appears in at most three positive and in at most three negative clauses.   
Let the graph $\mathcal{G}(\varphi)$ be the incidence graph of $\varphi$.
We require that $\mathcal{G}(\varphi)$ has a \emph{restricted planar rectilinear embedding}.
This means that we can embed $\mathcal{G}(\varphi)$ on a rectilinear grid of polynomial size in the plane~\cite[Section~3]{Cabello.2003}, separating the positive from the negative clauses on different sides of the variables. See Figure~\ref{fig:formula_to_coloring} for a typical restricted planar rectilinear embedding of~$\mathcal{G}(\varphi)$~\cite{Berg.2010}.

\begin{restatable}{definition}{restrictedPlanarMonotoneThreeBoundedThreeSatDefinition}\label{def:restricted-planar-monotone-3-bounded-3-sat}
    \prob{\probname{RPM 3-Bounded 3-SAT}}
    {Given:}
    { A monotone Boolean 3-bounded formula $\varphi$ and a restricted planar rectilinear embedding of the associated incidence graph $\mathcal{G}(\varphi)$.}
    {Question:}
    {Is $\varphi$ satisfiable?}{}
\end{restatable}

Darmann and D\"ocker~\cite{DBLP:journals/dam/DarmannD21} showed that this problem is \np-complete (even when a variable may appear in at most $p$ positive and at most $q$ negative clauses).

\subsection{Via \probname{Restricted $c$-Corner Complexity} To \probname{Corner Complexity}}
\label{sec:restricted-c-corner-complexity}
\probname{Restricted $c$-Corner Complexity} is a restricted variant of \probname{Corner Complexity} that uses only two distinct colors $c$ and $c'$.
Only color~$c$ can be used to find an extension with at most $\ell$ $c$-corners, and all grid corners must be $c'$-colored.

\begin{definition}
    \label{def:c-corner-complexity}
    \prob{\probname{Restricted $c$-Corner Complexity}}
    {Given:}
    {Coloring $\grid \in \{c, c'\}_0^{m \times n}$, with $c'$-colored grid corners, and an integer $\ell$.}
    {Question:}
    {Does there exist a valid extension $\extension\in \extensionsrestricted{\{c, 0\}}{\grid}$ s.t.~$\cornerscolor{c}{\extension} \leq \ell$?}{}
\end{definition}

Since we can only color white cells in $c$ or leave them white, the white cells can be used to connect $c$-colored cells into larger shapes to reduce the number of $c$-corners.
The $c'$-colored cells can be seen as obstacles for those connections.

\begin{figure}[b]
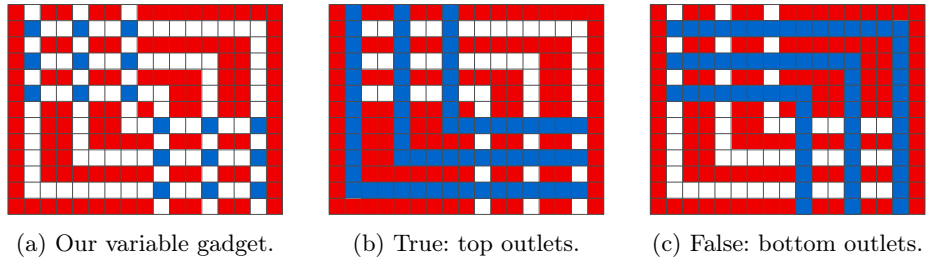

    \centering
    \begin{subfigure}[t]{.3\linewidth}
        \centering
		\includegraphics[page = 4, width=\linewidth]{chk_images.pdf}
		\caption{Our variable gadget.}
        \label{fig:variable_gadget}
    \end{subfigure}
    \hfill
    \begin{subfigure}[t]{.3\linewidth}
        \centering
		\includegraphics[page = 5, width=\linewidth]{chk_images.pdf}
		\caption{True: top outlets.}
        \label{fig:variable_gadget_true}
    \end{subfigure}
    \hfill
    \begin{subfigure}[t]{.3\linewidth}
		\centering
		\includegraphics[page = 6, width=\linewidth]{chk_images.pdf}
		\caption{False: bottom outlets.}
        \label{fig:variable_gadget_false}
    \end{subfigure}
    \caption{A variable gadget and its true and false state, coloring different outlets.}
    \label{fig:variable_gadget_true_false}
\end{figure}

To show \np-hardness of \probname{Restricted $c$-Corner Complexity} we reduce from \probname{RPM 3-Bounded 3-SAT}, and first create variable and clause gadgets.

\subsubsection{Variable gadget.}
Figure~\ref{fig:variable_gadget} shows the layout of the variable gadget, consisting of two $3 \times 3$-checkerboard-like patterns on the top-left and bottom-right quadrants with pathways between them over the other two quadrants.
In the top and bottom row are three white cells each, which we refer to as \emph{outlets}. These act as the connection points to the clause gadgets.
As each variable occurs at most three times in both positive and negative clauses, three outlets suffice.
When considering various $\{\cellblue, 0\}$-extensions over the variable gadget, we observe that inside a $3 \times 3$-checkerboard-like pattern, it is beneficial to connect the blue cells in rows or columns to reduce the number of blue corners.
We can further reduce corners by connecting a row from one $3 \times 3$-checkerboard-like pattern with a column from the other, using the pathways (see Figure~\ref{fig:variable_gadget_true_false}).
Then, at least one side will not have colored outlets in a minimum $\{\cellblue, 0\}$-extension.
Due to the constant size of the gadget, we can prove this by enumerating all $\{\cellblue, 0\}$-extensions.

\begin{restatable}{lemma}{variableGadgetLemma}\label{lem:variable-gadget}
    Any minimum $c$-corner extension $\oextension \in \oextensionsrestricted{\{c, 0\}}{\grid_{x_i}}$ of variable gadget $\grid_{x_i}$ has (1) $\cornerscolor{c}{\oextension} = 18$, and (2) colored outlets on at most one side.
\end{restatable}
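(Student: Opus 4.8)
The plan is to exploit that $\grid_{x_i}$ has constant size, so that $\oextensionsrestricted{\{c,0\}}{\grid_{x_i}}$ is a finite (though large) set and both claims reduce to a finite case analysis. To keep this manageable I would split the argument into three parts: a lower bound $\cornerscolor{c}{\oextension}\ge 18$, a matching upper bound $\cornerscolor{c}{\oextension}\le 18$, and a structural characterization of the extensions that meet the bound. Since only $c$-corners are counted and the only color a white cell may receive is $c$, throughout the analysis I treat every cell as either $c$-colored or white, and I evaluate $\cornerscolor{c}{\cdot}$ by summing the $2\times2$ contributions $\delta_c$ exactly as in its definition.

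For the upper bound I would simply evaluate $\cornerscolor{c}{\cdot}$ on the two extensions depicted in Figures~\ref{fig:variable_gadget_true} and~\ref{fig:variable_gadget_false}; a direct count shows each has exactly $18$ $c$-corners, giving $\cornerscolor{c}{\oextension}\le 18$ and simultaneously exhibiting the intended true and false states. The lower bound is the heart of the argument, and I would prove it by charging corners to disjoint parts of the gadget so that the per-part minima add up to $18$. The fixed $c$-cells in the two $3\times3$ checkerboard patterns sit in spread-out positions, so in every $\{c,0\}$-extension they force several pairs of consecutive rows (and columns) to differ; by Lemma~\ref{lem:2_corners_between_two_rows} each such differing pair contributes at least $2$ corners, and by Lemma~\ref{lem:corners-in-row-upper-bound-corners} together with its column analogue these contributions can be localized to disjoint $2\times2$ regions. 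Summing the corners forced within each checkerboard together with those forced along the pathways and at the grid boundary, I would verify that the total is at least $18$ for every extension, which with the upper bound yields $\cornerscolor{c}{\oextension}=18$ and settles part~(1).

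For part~(2) I would argue that any extension coloring outlets on both the top and the bottom side strictly exceeds $18$ corners. Intuitively, a $c$-shape that reaches outlets on both sides must traverse the gadget from top to bottom and therefore incurs additional bends beyond those already charged in the lower bound, whereas the configurations attaining $18$ link a row of one checkerboard to a column of the other through a single pathway, leaving exactly one side free to carry colored outlets. Combined with $\cornerscolor{c}{\oextension}=18$, this forces every minimum extension to color outlets on at most one side.

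The main obstacle is the tight lower bound together with this characterization. A decomposition that treats the two checkerboards independently undershoots $18$, so the pathway interactions must be accounted for explicitly, and one must prove that the corner savings obtained by connecting the two checkerboards are realizable in only one direction---which is precisely what rules out two-sided outlet colorings. Because the gadget is of constant size, this final step can ultimately be certified by a careful, optionally computer-assisted, enumeration of the finitely many relevant white-cell colorings.
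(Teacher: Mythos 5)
Your fallback is, in fact, the paper's entire proof: since the gadget has constant size, the paper simply enumerates all valid $\{c,0\}$-extensions (after observing that the internal pathways may be assumed either fully colored or fully white, which prunes the search space), and reads off both the minimum of $18$ and the fact that no $18$-corner extension colors outlets on both sides. So the computer-assisted enumeration you defer to in your last sentence is not an optional certification step; it is the argument, and in that sense your proposal and the paper coincide on the only load-bearing step.

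The charging argument you present as the heart of the proof has a genuine gap, which you yourself identify but never close: because colored pathways can merge shapes across the two checkerboards, the per-checkerboard contributions in a connected optimal extension can be strictly smaller than the minima of the isolated checkerboards, so disjoint per-region minima cannot simply be summed to reach $18$. Moreover, Lemma~\ref{lem:2_corners_between_two_rows} and Lemma~\ref{lem:corners-in-row-upper-bound-corners} do not ``localize contributions to disjoint $2\times2$ regions'' in the way you claim --- the latter only bounds the grid's corner count from below by the corner count of a single row --- so the asserted total of at least $18$ is never established. Part~(2) likewise rests on intuition (``additional bends beyond those already charged'') rather than on a verified case distinction. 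None of this is fatal, because the finite enumeration does carry both claims; but the enumeration (together with the pathway-pruning observation that keeps it tractable) should be presented as the proof itself, with the structural discussion of true/false states as motivation, not as a substitute for it.
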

\begin{proof}
    Since the variable gadget has a constant size, we can iterate over all valid $\{c, 0\}$-extensions of the variable gadget.
    Observe that we do not need to consider all possible extensions, but can restrict ourselves to those in which the pathways inside the gadget are either fully colored or left white, since in any other scenario this would only contribute additional corners or simply yield no improvement over leaving them white or fully colored.
    The remaining extensions all have at least 18 $c$-corners.
    Furthermore, all the extensions with 18 $c$-corners had colored outlets at most on one side, but never on both.\qed
\end{proof}
We want to emphasize two minimum extensions that represent the true and false states of a variable (see Figure~\ref{fig:variable_gadget_true_false}).
While other minimum extensions exist, they can always be replaced by the true or false extensions.

\begin{figure}[t]
    \centering
    \includegraphics[page = 7,width=\linewidth]{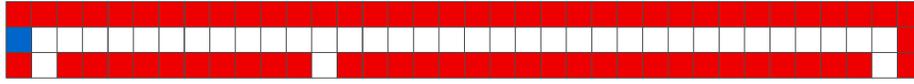}
        \caption{A positive clause gadget.}
    \label{fig:clause_gadget}
\end{figure}

\begin{figure}[b]
    \centering
    \includegraphics[page = 8,width=\textwidth]{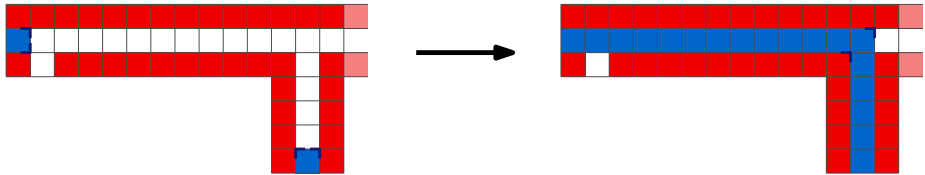}
    \caption{Connecting the blue clause gadget cell with a blue variable gadget outlet.}
    \label{fig:clause_gadget_connected}
\end{figure}

\mypar{Clause gadget.}
Figure \ref{fig:clause_gadget} shows the layout of a clause gadget: One blue cell with a line of white cells to its right.
These cells are surrounded by red cells except for outlets at the bottom (positive clause) or top (negative clause), one for each clause literal.
Each outlet is connected by a vertical pathway to an outlet of the corresponding variable gadget.
Any minimum blue corner extension of a clause gadget contributes at most four corners, as it can leave all white cells white.
If the outlet of a variable gadget is colored, we can extend the clause coloring as in Figure~\ref{fig:clause_gadget_connected} to reduce the number of corners by two. 
We cannot eliminate more than two blue corners, as the two blue corners left of the initial blue cell always remain. 
Lastly, we cannot remove any corner if no outlet is colored.

\begin{restatable}{lemma}{clauseGadgetLemma}\label{lem:clause-gadget}
    Any minimum $c$-corner extension $\oextension \in \oextensionsrestricted{\{c, 0\}}{\grid_C}$ of a clause gadget $\grid_C$ contributes either (1) two $c$-corners if it is connected to at least one colored outlet of a variable gadget, or (2) it contributes four $c$-corners.
\end{restatable}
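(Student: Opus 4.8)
The plan is to prove the two cases by determining exactly which $c$-corners any extension of $\grid_C$ must pay for, via a constant-size local analysis around the single $c$-colored cell, together with the fact that filling a straight pathway of white cells creates no new corners. Throughout, I count the $c$-corners lying in the $2\times2$-regions belonging to the clause gadget; since the $c$-cell has $c'$-cells immediately to its left and directly above and below it, these neighbours are fixed and are treated as white by $\delta_c$.

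First I would establish the bounds that hold in \emph{every} extension. For the upper bound, leaving all white cells white colors only the initial $c$-cell, a unit square contributing exactly four $c$-corners; hence a minimum extension contributes at most four. For the lower bound, consider the two $2\times2$-regions at the top-left and bottom-left of the $c$-cell. In each of them the three non-$c$ cells are $c'$-colored frame cells that can never be recolored, so $\delta_c$ evaluates to $1$ on both regions regardless of the extension. Thus every extension contributes at least two $c$-corners, which matches the value claimed in case~(1).

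For case~(2), where no incident variable outlet is $c$-colored, I would show the contribution cannot drop below four. Since the $c$-cell is colored, the set of $c$-cells in the gadget is non-empty and, as every pathway ends at an uncolored outlet, it cannot merge with any $c$-region outside the gadget; its boundary therefore consists of orthogonal polygon(s) lying entirely inside the clause gadget and contributes at least four corners, because every simple rectilinear polygon has at least four. Moreover, coloring cells along a dead-end pathway only ever adds a protrusion and never decreases the count; this can be made formal through the column-analogue of Lemma~\ref{lem:remove-row-not-increase-corners}, which deletes such a protrusion without increasing corners. Together with the upper bound of four, the contribution is exactly four.

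Finally, for case~(1) I would exhibit an extension attaining the lower bound of two. Color the white cells of the strip from the $c$-cell up to the column of a $c$-colored outlet, then color that outlet's pathway down to the (already colored) variable outlet, merging the $c$-cell into the variable's $c$-region as in Figure~\ref{fig:clause_gadget_connected}. A direct $\delta_c$-check of the few affected $2\times2$-regions shows that the two right corners of the $c$-cell are absorbed by this connection and that the turn toward the outlet contributes no corner charged to the clause gadget, so only the two persistent left corners survive; hence the minimum contribution is exactly two. I expect the principal obstacle to be the bookkeeping at the connection point in case~(1)---making precise which $2\times2$-regions are charged to the clause gadget rather than to the variable gadget, so that the connecting turn does not secretly introduce a third corner---and, secondarily, the formal argument in case~(2) that routing $c$ through an uncolored outlet is never beneficial.
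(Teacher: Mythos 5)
Your proposal is correct and follows essentially the same route as the paper's proof: the trivial all-white extension gives the upper bound of four, the connection extension of Figure~\ref{fig:clause_gadget_connected} achieves two when a colored outlet exists (with the same accounting that the bend's new corners are offset by the absorbed outlet corners), the permanently $c'$-surrounded left side of the $c$-cell forces the two remaining corners, and in the unconnected case the $c$-region is a closed rectilinear shape and hence has at least four corners. The only divergence is your side remark invoking the column-analogue of Lemma~\ref{lem:remove-row-not-increase-corners} to handle dead-end pathway protrusions; that lemma removes a whole column rather than recolors cells, so it does not directly apply, but this remark is not load-bearing since your main Case~(2) argument already suffices.
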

\begin{proof}
    If we do not extend the coloring, the clause gadget contributes four $c$-corners of the single cell colored in $c$.
    Thus, any minimum $c$-corner extension of a clause gadget contributes at most four corners.
    Depending on whether the clause gadget has a pathway to a colored outlet of a variable gadget (Case~1) or not (Case~2), we obtain a different number of $c$-corners.
    
    \mypar{Case 1.}
    If we color the pathway from the colored cell to the colored outlet of the variable gadget, we create two new corners at the bend of the pathway.
    Simultaneously, we eliminate the two right corners of the initial cell (colored blue in Figure~\ref{fig:clause_gadget}) and the two corners of the colored outlet from the variable gadget.
    Thus, effectively eliminating two $c$-corners from the initial four $c$-corners of the clause gadget.
    Figure \ref{fig:clause_gadget_connected} demonstrates this behavior.
    Clearly, we cannot eliminate more than two $c$-corners since the two left $c$-corners of the initial cell will always remain, even if we connect to multiple colored outlets. 

    \mypar{Case 2.}
    If we do not connect to any colored outlets, we cannot connect the single cell to any other shape in the same color in order to eliminate corners: the extension in Figure~\ref{fig:clause_gadget_connected} will actually increase the number of corners, when there is no colored outlet.
    Since the minimum number of corners for any rectilinear shape is at least four, the clause gadget contributes at least four $c$-corners.\qed
\end{proof}

\mypar{Complete construction.}
For a given instance $(\varphi,\mathcal{G}(\varphi))$  of \probname{RPM 3-Bounded 3-SAT} we construct a coloring \grid for a bounded grid as shown in Figure~\ref{fig:formula_to_coloring}.

To do so, we first create a variable gadget for each variable $x_i \in \mathcal{X}$ and place it at the rectangular vertex representing~$x_i$ in~$\mathcal{G}(\varphi)$.
Next, we create a clause gadget for each clause~$C \in \varphi$ and place it at the position of~$C$ in~$\mathcal{G}(\varphi)$.
The gadgets determine the size of our grid and we color the remaining area red while ensuring that the vertical pathways between clause gadgets and variable gadgets remain white. 
This process takes polynomial time and results in a polynomial-sized grid with grid corners colored red.
The outcome is a valid instance $(\grid, 18n + 2m)$ of \probname{Restricted $c$-Corner Complexity}, for $c$ = (\cellblue)lue and $c'$ = (\cellred)ed.

\begin{figure}[b!]
    \centering
    \includegraphics[page = 10]{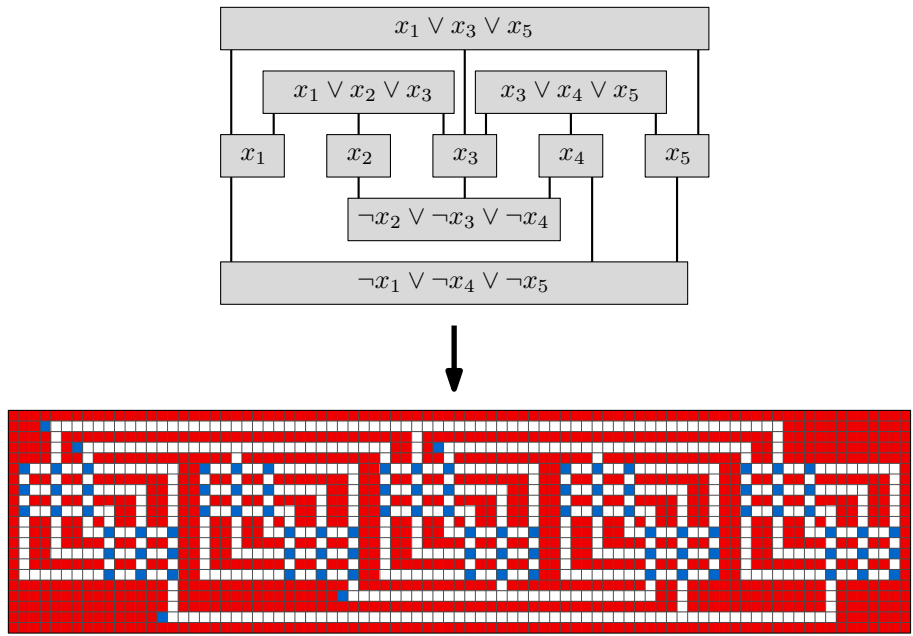}
    \caption{Construction of coloring \grid from a monotone 3-bounded formula $\varphi$.}
    \label{fig:formula_to_coloring}
\end{figure}

To gain intuition about the correctness of the reduction, observe that we can simulate a variable assignment in the respective variable gadgets as indicated in Figure~\ref{fig:variable_gadget_true_false}.
If it satisfies all the clauses, then for each clause Case~(1) of Lemma~\ref{lem:clause-gadget} applies.
Together with the $18n$ corners of the variable gadgets, this results in $18n + 2m$ corners overall.
Simultaneously, for an extension of \grid with $18n + 2m$ corners, we can read off a variable assignment that satisfies all clauses.

\begin{restatable}{theorem}{cCornerComplexityHardness}\label{thm:c-corner-complexity-np-hard}
    \probname{Restricted $c$-Corner Complexity} is \np-hard.
\end{restatable}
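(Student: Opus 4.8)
The plan is to establish \np-hardness of \probname{Restricted $c$-Corner Complexity} by proving that the polynomial-time construction sketched above is a correct reduction from \probname{RPM 3-Bounded 3-SAT}: namely, the formula $\varphi$ is satisfiable if and only if the constructed instance $(\grid, 18n + 2m)$ admits an extension with at most $18n + 2m$ $c$-corners. Since the construction itself runs in polynomial time and produces a polynomial-sized grid (as argued in the complete construction paragraph), the crux is the correctness of the two implications, using Lemma~\ref{lem:variable-gadget} and Lemma~\ref{lem:clause-gadget} as the main tools.

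\medskip
\noindent\textbf{Forward direction (satisfiable $\Rightarrow$ low-corner extension).} First I would assume $\varphi$ has a satisfying assignment and construct an explicit extension \extension of \grid. For each variable $x_i$, if $x_i$ is true I color its gadget in the ``true'' state (top outlets colored) and otherwise in the ``false'' state (bottom outlets colored), as in Figure~\ref{fig:variable_gadget_true_false}; by Lemma~\ref{lem:variable-gadget}(1) each such gadget contributes exactly $18$ $c$-corners, totalling $18n$. Because the assignment satisfies every clause, each clause gadget has at least one literal set to true, whose corresponding variable outlet is colored and connected to the clause gadget via the white vertical pathway; I then color that pathway so that Case~(1) of Lemma~\ref{lem:clause-gadget} applies, contributing exactly two $c$-corners per clause, for a total of $2m$. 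The one subtlety to verify here is that these local colorings compose into a single \emph{globally valid} extension with no interference: I must check that coloring the chosen pathways introduces no extra $c$-corners beyond those counted locally by the two lemmata, which follows because the pathways are separated by red obstacles and meet the gadgets exactly at the outlet cells the lemmata account for. Summing gives $\corners[c]{\extension} = 18n + 2m$.

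\medskip
\noindent\textbf{Reverse direction (low-corner extension $\Rightarrow$ satisfiable).} Conversely, I would take any extension \extension with $\cornerscolor{c}{\extension} \le 18n + 2m$ and extract a satisfying assignment. By Lemma~\ref{lem:corners-in-row-upper-bound-corners} (or directly by locality of corner counting) the total corner count decomposes as a sum over gadgets plus pathway contributions; each variable gadget contributes at least $18$ (Lemma~\ref{lem:variable-gadget}) and each clause gadget at least $2$ (Lemma~\ref{lem:clause-gadget}), so the lower bound $18n + 2m$ is already forced and every gadget must attain its minimum \emph{simultaneously}. Hence every variable gadget is in a minimum extension, so by Lemma~\ref{lem:variable-gadget}(2) it has colored outlets on at most one side; I read off the assignment by setting $x_i$ true if (and only if) its top outlets are colored, and false otherwise (treating the ``no colored outlet'' case as either, say false). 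Each clause gadget must achieve two corners, so Case~(1) of Lemma~\ref{lem:clause-gadget} holds and it is connected to at least one colored variable outlet, which by the outlet-side correspondence means the corresponding literal is satisfied; thus every clause is satisfied.

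\medskip
\noindent\textbf{Main obstacle.} The hard part is the rigorous bookkeeping that the global corner count equals the sum of the per-gadget minima with no cross-terms, and in particular that an adversarial extension cannot ``cheat'' by routing color through the red-filled background or by partially coloring outlets in a way that saves corners in one gadget at another's expense. I would address this by arguing that the red obstacle cells isolate the gadgets and pathways from one another, so that any $2\times2$-region contributing a $c$-corner lies entirely within a single gadget or a single pathway; combined with Lemma~\ref{lem:variable-gadget}(2) forcing at-most-one-sided outlets, this rules out the only plausible interference (a single colored outlet being shared to satisfy a clause while violating the variable's consistency is precisely what the one-sided condition prevents). Once this separation is established, both implications follow from the two gadget lemmata by additivity, completing the reduction and hence the proof that \probname{Restricted $c$-Corner Complexity} is \np-hard.
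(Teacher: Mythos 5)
Your proposal is correct and follows essentially the same route as the paper: a polynomial-time reduction from \probname{RPM 3-Bounded 3-SAT} with threshold $18n+2m$, using Lemma~\ref{lem:variable-gadget} and Lemma~\ref{lem:clause-gadget} for both directions of the equivalence, reading the truth assignment off the colored outlet sides. If anything, you are slightly more explicit than the paper about why the per-gadget corner counts are additive (the red cells isolating gadgets and pathways), a point the paper's proof uses implicitly when it sums the gadget contributions.
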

\begin{proof}
    We provide a polynomial-time many-one reduction from \probname{RPM 3-Bounded 3-SAT} to \probname{Restricted $c$-Corner Complexity}.
    Let $\varphi = \mathcal{P} \cup \mathcal{N}$ (with $m = |\varphi|$) be an arbitrary instance of \probname{RPM 3-Bounded 3-SAT} over variables $\mathcal{X} = \{x_1,\dots,x_n\}$, i.e.,~$\varphi$ is monotone and each variable $x_i \in \mathcal{X}$ appears at most three times in $\mathcal{P}$ and at most three times in $\mathcal{N}$.
    Furthermore, we are given a restricted planar rectilinear embedding of the associated graph $\mathcal{G}(\varphi)$.

    Observe that a restricted planar rectilinear embedding of $\mathcal{G}(\varphi)$ can be described by a tuple $(\pi, \lambda)$, where $\pi: \mathcal{X} \to \mathbb{N}$ is a permutation of all variables and $\lambda: V \to \mathbb{Z}$ assigns each vertex a level.
    $\pi$ defines the order of the variable vertices along a horizontal line.
    $\lambda$ describes the vertical distance of a vertex from the variable vertices.
    Thus $\lambda(v)= 0$ for every $v \in \mathcal{X}$, $\lambda(v) > 0$ for every $v \in \mathcal{P}$, and $\lambda(v) < 0$ for every $v \in \mathcal{N}$.
    Let $\gamma^+ = \max_{v \in V}\{\lambda(v)\}$ be the number of positive levels, $\gamma^- = -\min_{v \in V}\{\lambda(v)\}$ the number of negative levels, and $\gamma = \gamma^+ + \gamma^-$ be the span of all levels. Trivially, $\gamma$ is bounded by the number of clauses (one clause per level in the worst case).
    
    We construct a bounded $m' \times n'$ grid $\grid \in \{c, c'\}_0^{m' \times n'}$ with $m' = 17n$ and $n'=2\gamma+13$.
    For each variable $x_i \in \mathcal{X}$, we create a variable gadget as shown in Figure~\ref{fig:variable_gadget}, where we remind the reader that the \cellblue{}lue cells denote $c$ and the \cellred{}ed cells $c'$.
    All variable gadgets are placed along a horizontal line ordered according to $\pi$ s.t.~we have $2\gamma^+$ empty rows at the top and $2\gamma^-$ empty rows at the bottom remaining.
    In this left-over space, we will place our clause gadgets as shown in Figure~\ref{fig:clause_gadget}.
    Again, the \cellblue{}lue cells denote $c$ and \cellred{}ed cells $c'$.
    We start by placing the clause gadget $\grid_C$ for each clause $C \in \varphi$ with $\vert\lambda(C)\vert = 1$, then for $\vert\lambda(C)\vert = 2$, and so on, until no more clauses are left to be drawn on the grid.
    Neglecting the outlets, each clause gadget can be embedded within two rows.
    For each positive clause $C \in \mathcal{P}$, we place the corresponding clause gadget $2(\lambda(C) - 1)$ rows above the variable gadgets.
    For each negative clause $C \in \mathcal{N}$, we place the corresponding clause gadget $-2(\lambda(C) + 1)$ rows below the variable gadgets.
    Assume we are placing a clause gadget $\grid_C$ for clause $C \in \varphi$ at level $i$. Let $x_{\ell}$ and $x_r$ be the leftmost and rightmost variables in $C$ according to $\pi$.
    Then the clause gadget is connected to the rightmost available outlet of $\grid_{x_{\ell}}$ and the leftmost available outlet of $\grid_{x_r}$.
    Observe that the width of the clause gadget is defined by the span between these two outlets.
    For any other variable in $C$, the outlet can be chosen arbitrarily. The outlets to which the clause gadget is connected cannot be used by any further clause gadget down the line.
    After all the clauses have been processed, the remaining white space between them is colored with $c'$ while still ensuring that the vertical pathways between clause and variable gadgets remain white.
    Trivially, the entire construction procedure can be done in polynomial time.
    Furthermore, the grid corners will always be colored in $c'$ by this construction method.
    Figure \ref{fig:formula_to_coloring} shows an example of such a construction.
    Thus, \grid fulfills the requirements for \probname{Restricted $c$-Corner Complexity}.
    
    We want to show that $\varphi$ is a positive instance of \probname{RPM 3-Bounded 3-SAT} iff~$(\grid, 18n + 2m$) is a positive instance of \probname{Restricted $c$-Corner Complexity}.

    \mypar{$\boldsymbol{(\Rightarrow)}$.}
    Suppose $\varphi$ is a positive instance of \probname{RPM 3-Bounded 3-SAT}.
    Then there exists a truth assignment $T$ over $\mathcal{X}$ that satisfies every clause in $\varphi$.
    We construct a $\{c, 0\}$-extension $\extension \in \extensionsrestricted{\{c, 0\}}{\grid}$ of \grid as follows.
    \begin{itemize}
        \item For each variable $x_i \in \mathcal{X}$: If $T(x_i)$ is true, color the corresponding variable gadget $\grid_{x_i}$ according to the true state (Figure~\ref{fig:variable_gadget_true}, but replace \cellblue with $c$).
        If $T(x_i)$ is false, color $\grid_{x_i}$ according to the false state (Figure~\ref{fig:variable_gadget_false}, but replace \cellblue with $c$).
        \item For each positive clause $C \in \mathcal{P}$: Let $x_i$ be a variable used in $C$ s.t.~$T(x_i)$ is true. Color the pathway from the single $c$-colored cell of $\grid_C$ to the colored outlet of $\grid_{x_i}$ with color $c$.
        \item For each negative clause $C \in \mathcal{N}$: Let $x_i$ be a variable used in $C$ s.t.~$T(x_i)$ is false. Color the pathway from the single $c$-colored cell of $\grid_C$ to the colored outlet of $\grid_{x_i}$ with color $c$.
    \end{itemize}
    All that is left to do is to count the number of $c$-corners.
    \begin{itemize}
        \item The variable gadget of each variable $x_i \in \mathcal{X}$ admits 18 $c$-corners due to our arguments in Lemma~\ref{lem:variable-gadget}.
        \item For the clause gadget of each positive clause $C \in \mathcal{P}$, the pathway from the single cell colored in $c$ of the clause gadget to the outlet of the variable gadget of some variable $x_i$ is colored in $c$.
        Since $T(x_i)$ is true, the top outlets of the variable gadget are colored.
        Thus, as we have already seen in Case~(1) of Lemma~\ref{lem:clause-gadget}, the clause gadget contributes two additional $c$-corners.
        \item For the clause gadget of each negative clause $C \in \mathcal{N}$, the pathway from the single cell in $c$ of the clause gadget to the outlet of the variable gadget of some variable $x_i$ is colored in $c$.
        Since $T(x_i)$ is false, the bottom outlets of the variable gadget are colored.
        Thus, as we have already seen in Lemma~\ref{lem:clause-gadget}, the clause gadget contributes also here two additional $c$-corners.
    \end{itemize}

    Thus, $\cornerscolor{c}{\extension} = 18n + 2m$.
    Therefore, $(\grid, 18n + 2m)$ is a positive instance of \probname{Restricted $c$-Corner Complexity}.
    
    \mypar{$\boldsymbol{(\Leftarrow)}$.}
    Suppose $(\grid, 18n + 2m)$ is a positive instance of \probname{Restricted $c$-Corner Complexity}.
    Then there exists a $\{c, 0\}$-extension $\extension \in \extensionsrestricted{\{c, 0\}}{\grid}$ of $\grid$ s.t.~$\cornerscolor{c}{\extension} \le 18n + 2m$.
    By Lemma~\ref{lem:variable-gadget}, we know that each variable gadget has at least 18 $c$-corners and colored outlets at most on one side.
    Thus, each clause gadget may contribute only two additional $c$-corners.
    By Lemma~\ref{lem:clause-gadget}, this means that each clause gadget is connected to a colored outlet of a variable gadget.
    We construct a truth assignment $T$ over $\mathcal{X}$ as follows.
    \begin{itemize}
        \item If the variable gadget $\grid_{x_i}$ of $x_i$ has a colored outlet at the top (or no outlet at all), then $T(x_i) = \text{true}$.
        \item If the variable gadget $\grid_{x_i}$ of $x_i$ has a colored outlet at the bottom, then $T(x_i) = \text{false}$.
    \end{itemize}
    All that is left to do is to verify that all clauses in $\varphi$ are satisfied by $T$.
    \begin{itemize}
        \item For each positive clause $C \in \mathcal{P}$: The clause gadget of $C$ is connected to the variable gadget of some variable $x_i$ in $C$ over a colored outlet. By construction of $T$, $T(x_i) = \text{true}$; hence $C$ is satisfied. 
        \item For each negative clause $C \in \mathcal{N}$: The clause gadget of $C$ is connected to the variable gadget of some variable $x_i$ in $C$ over a colored outlet. By construction of $T$, $T(x_i) = \text{false}$; hence $C$ is satisfied. 
    \end{itemize}
    Thus $\varphi$ is a positive instance of \probname{RPM 3-Bounded 3-SAT}.\qed
\end{proof}
To complete the overall reduction, we introduce the notion of \emph{internal} corners, which are all corners \emph{except} those at the corner of the grid in \grid.
We denote with $\Delta_c^{-}: \colors_0^{m \times n} \to \mathbb{Z}_0^+$ the number of internal corners for a color $c \in \colors$, which is 
\begin{align*}
    \cornerscolorwo{c}{\grid} := \cornerscolor{c}{\grid} - \sum_{
    \substack{
    i \in \{0, m\}\\
    j \in \{0, n\}}}
    \delta_c\left(\submmatrix{\ngrid^P}{[i, i + 1]}{[j, j + 1]}\right).
\end{align*}
Here, $\Delta^-$ is defined analogously to $\Delta$.
For two colors $c$ and $c'$, we can see that in any $\{c, c'\}$-extension of $\grid$ every internal $c$-corner is also an internal $c'$-corner and vice versa, since there are no white cells.
This brings us to Property~\ref{obs:c-corners-c-prime-corners}, which we use to prove Lemma~\ref{lem:completely-color-grid}: The number of corners of a coloring \grid does not increase if we color all white cells with the color that has more corners in \grid.
\begin{property}
    \label{obs:c-corners-c-prime-corners}
    Any $\{c, c'\}$-extension $\extension$ of $\grid \in \{c, c'\}_0^{m \times n}$ has $\cornerscolorwo{c}{\extension} = \cornerscolorwo{c'}{\extension}$.
\end{property}
\begin{restatable}{lemma}{completelyColoredGridLemma}\label{lem:completely-color-grid}
    Let $\grid \in \{c, c'\}_0^{m \times n}$ be a coloring in which all grid corners are colored in $c'$.
    There exists an extension $\extension \in \extensionsrestricted{\{c, c'\}}{\grid}$ s.t.~$\corners{\extension} \leq \corners{\grid}$.
\end{restatable}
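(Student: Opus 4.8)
The plan is to exhibit \extension\ as one of the two ``monochromatic'' full extensions of \grid: let $H_c$ be obtained by coloring every white cell of \grid\ with $c$, and $H_{c'}$ by coloring every white cell with $c'$. Both are valid $\{c,c'\}$-extensions containing no white cell, so Property~\ref{obs:c-corners-c-prime-corners} applies to each. I will show that the smaller of $\corners{H_c}$ and $\corners{H_{c'}}$ is already at most $\corners{\grid}$, and pick that coloring as \extension.

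The first ingredient is that recoloring white cells with $c$ cannot change the number of $c'$-corners: the counting function $\delta_{c'}$ depends only on which cells equal $c'$, and both a white cell and a $c$-cell count as ``not $c'$''. Hence $\cornerscolor{c'}{H_c}=\cornerscolor{c'}{\grid}$, and symmetrically $\cornerscolor{c}{H_{c'}}=\cornerscolor{c}{\grid}$. The second ingredient concerns the four $2\times2$-regions at the corners of the grid (indices $i\in\{0,m\}$, $j\in\{0,n\}$): each such region contains exactly one non-padding cell, namely a grid corner of \grid, which is $c'$ by assumption. Therefore these four regions contribute $4$ to the $c'$-corner count and $0$ to the $c$-corner count, so that by definition of $\Delta^-$ we have $\cornerscolorwo{c'}{H_c}=\cornerscolor{c'}{H_c}-4$ and $\cornerscolorwo{c}{H_c}=\cornerscolor{c}{H_c}$ (and the analogous identities for $H_{c'}$).

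Combining both ingredients with Property~\ref{obs:c-corners-c-prime-corners}, which gives $\cornerscolorwo{c}{H_c}=\cornerscolorwo{c'}{H_c}$ for the full extensions, yields the closed forms
\begin{align*}
    \corners{H_c}=2\,\cornerscolor{c'}{\grid}-4
    \qquad\text{and}\qquad
    \corners{H_{c'}}=2\,\cornerscolor{c}{\grid}+4 .
\end{align*}
For instance, for $H_c$ one gets $\cornerscolor{c}{H_c}=\cornerscolorwo{c}{H_c}=\cornerscolorwo{c'}{H_c}=\cornerscolor{c'}{\grid}-4$, and adding the unchanged value $\cornerscolor{c'}{H_c}=\cornerscolor{c'}{\grid}$ gives the stated sum; the computation for $H_{c'}$ is symmetric. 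The finish is then immediate: summing the two identities gives $\corners{H_c}+\corners{H_{c'}}=2\bigl(\cornerscolor{c}{\grid}+\cornerscolor{c'}{\grid}\bigr)=2\,\corners{\grid}$, so at least one of $H_c,H_{c'}$ has at most $\corners{\grid}$ corners, and that coloring serves as the required \extension.

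I expect the only real obstacle to be the boundary bookkeeping: the $c'$-colored grid corners break the symmetry between the two colors and produce the asymmetric $\pm4$ terms. This is precisely why coloring with the naive ``majority'' color need not work (it can overshoot by up to $4$ corners), whereas the averaging identity $\corners{H_c}+\corners{H_{c'}}=2\,\corners{\grid}$ sidesteps any case distinction. Getting the constant $4$ right through the $\Delta^-$ decomposition is the crux; the two invariance observations and the appeal to Property~\ref{obs:c-corners-c-prime-corners} are routine.
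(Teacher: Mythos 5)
Your proof is correct, and it shares the paper's two key ingredients---filling all white cells with a single color leaves the other color's corner count untouched, and Property~\ref{obs:c-corners-c-prime-corners} forces the two colors' internal counts to coincide in any fully colored extension---but it finishes by a genuinely different argument. The paper works with internal corners throughout (legitimate, since the grid-corner contributions are fixed in every extension), assumes w.l.o.g.\ that $\cornerscolorwo{c}{\grid} \leq \cornerscolorwo{c'}{\grid}$, fills every white cell with the internal-majority color $c'$, and then splits into the equality and strict-inequality cases to conclude $\corners{\extension}\leq\corners{\grid}$. You instead evaluate both monochromatic fillings exactly, $\corners{H_c}=2\,\cornerscolor{c'}{\grid}-4$ and $\corners{H_{c'}}=2\,\cornerscolor{c}{\grid}+4$, note that they sum to $2\,\corners{\grid}$, and take the better of the two by pigeonhole. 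What your route buys: no w.l.o.g.\ choice and no case distinction, plus closed forms that say exactly which filling wins and by how much; your side remark is also accurate, in that a ``majority'' measured in \emph{total} corners can overshoot by up to $4$, a pitfall the paper sidesteps precisely because it compares \emph{internal} counts. What the paper's route buys: by never leaving the internal-corner world it needs no boundary bookkeeping at all, whereas your argument hinges on the $\pm 4$ terms being right---which they are: each of the four corner $2\times2$-regions of $\grid^P$ contains exactly one non-padding cell, a $c'$-colored grid corner, contributing one $c'$-corner and no $c$-corner, and this persists in $H_c$, $H_{c'}$, and $\grid$ alike.
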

\begin{proof}
    As we have, by assumption, that in \grid all grid corners are colored, the corners induced by the grid corners will not change, and hence it is sufficient to consider the internal corners of a coloring.
    For the sake of the proof assume that \grid does not completely color the grid.
    We assume w.l.o.g.~that $\cornerscolorwo{c}{\grid} \leq \cornerscolorwo{c'}{\grid}$ holds, as the other case is symmetric.
    We define \extension as the $\{c, c'\}$-extension of \grid that colors every white cell in $c'$.
    Observe that we do neither create $c$-corners nor vanish them.
    This is because every $c$-$c'$-corner remains, and every $c$-$0$ corner turns into a $c$-$c'$ corner.
    As \extension colors the entire grid, we have $\cornerscolorwo{c}{\extension} = \cornerscolorwo{c'}{\extension}$ due to Property~\ref{obs:c-corners-c-prime-corners}.

    We can now distinguish between the following two cases.
    Either $\cornerscolorwo{c}{\grid} = \cornerscolorwo{c'}{\grid}$, or $\cornerscolorwo{c}{\grid} < \cornerscolorwo{c'}{\grid}$ holds.
    In the former case, as this equality remains in the extension \extension and the number of $c$-corners remained, the number of $c'$-corners did not change either, i.e.,~$\corners{\extension} = \corners{\grid}$ and in particular $\corners{\extension} \leq \corners{\grid}$ holds.
    For the latter case, we know as $\cornerscolorwo{c}{\extension} = \cornerscolorwo{c'}{\extension}$ holds and the number of $c$-corners did not change, that the number of $c'$corners must be smaller in \extension compared to \grid, i.e.,~$\corners{\extension} \leq \corners{\grid}$.\qed
\end{proof}

We will now use Lemma~\ref{lem:completely-color-grid} to show \np-hardness of \probname{Corner Complexity}. Paired with the observation that $\Delta$ can be evaluated in polynomial time, we arrive at the following theorem.

\begin{restatable}{theorem}{cornerComplexityNPCompleteTheorem}\label{thm:corner-complexity-np-complete}
    \probname{Corner Complexity} is \np-complete, even for $k=2$.
\end{restatable}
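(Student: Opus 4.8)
The plan is to prove membership in \np{} and then \np-hardness by a direct reduction from \probname{Restricted $c$-Corner Complexity}, which is \np-hard by Theorem~\ref{thm:c-corner-complexity-np-hard} and already uses only $k=2$ colors. For membership, I would observe that an extension $\extension \in \extensions{\grid}$ is a certificate of size $\bigO{mn}$, and that $\corners{\extension}$ can be evaluated in polynomial time by summing $\delta_c$ over all $\bigO{mn}$ many $2\times2$-regions and over both colors; comparing the result against $\ell$ decides the instance.

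For hardness, given an instance $(\grid, \ell)$ of \probname{Restricted $c$-Corner Complexity} with $c'$-colored grid corners, I would map it to the \probname{Corner Complexity} instance $(\grid, 2\ell + 4)$, keeping the very same grid but now charging for \emph{all} corners. The crux is relating the two objective functions. The key identity is that for any \emph{fully colored} $\{c,c'\}$-extension \extension\ of \grid{} (no white cells remaining), Property~\ref{obs:c-corners-c-prime-corners} gives $\cornerscolorwo{c}{\extension} = \cornerscolorwo{c'}{\extension}$, while the four fixed $c'$-colored grid corners each contribute exactly one $c'$-corner and zero $c$-corners. Hence $\corners{\extension} = \cornerscolor{c}{\extension} + \cornerscolor{c'}{\extension} = 2\cornerscolor{c}{\extension} + 4$, so on fully colored extensions minimizing the total number of corners is equivalent to minimizing the number of $c$-corners.

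The two directions of correctness then follow. For the forward direction, from a $\{c,0\}$-extension \extension\ with $\cornerscolor{c}{\extension} \le \ell$ I would color every remaining white cell with $c'$; recoloring a non-$c$ cell to another non-$c$ color creates and destroys no $c$-corner, so the resulting fully colored $\extension'$ satisfies $\cornerscolor{c}{\extension'} = \cornerscolor{c}{\extension}$, and the identity yields $\corners{\extension'} = 2\cornerscolor{c}{\extension} + 4 \le 2\ell + 4$. For the backward direction, from any $\extension'' \in \extensions{\grid}$ with $\corners{\extension''} \le 2\ell + 4$ I would first apply Lemma~\ref{lem:completely-color-grid} to obtain a fully colored extension $\extension'$ with $\corners{\extension'} \le \corners{\extension''}$; the identity then gives $\cornerscolor{c}{\extension'} \le \ell$, and uncoloring every cell that was white in \grid{} but received $c'$ produces a valid $\{c,0\}$-extension with the same $c$-corner count (again, changing $c'$ to $0$ preserves all $c$-corners), certifying the original instance.

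The main obstacle is the conceptual one of bridging the mismatch between the two problems: \probname{Restricted $c$-Corner Complexity} ignores $c'$-corners and forbids placing $c'$, whereas \probname{Corner Complexity} charges for both colors and permits $c'$. The factor-two identity $\corners{\cdot} = 2\cornerscolor{c}{\cdot} + 4$ on fully colored grids, enabled by Property~\ref{obs:c-corners-c-prime-corners} and Lemma~\ref{lem:completely-color-grid}, is precisely what neutralizes this mismatch; verifying that both recoloring steps (turning $0$ into $c'$ in the forward direction, and $c'$ into $0$ in the backward direction) leave the $c$-corner count invariant is the routine but essential technical point.
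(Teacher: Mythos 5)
Your proposal is correct and follows essentially the same route as the paper: \np-membership by checking a certificate extension with the polynomial-time corner-counting function, and \np-hardness via the identical reduction $(\grid,\ell) \mapsto (\grid, 2\ell+4)$ from \probname{Restricted $c$-Corner Complexity}, using Lemma~\ref{lem:completely-color-grid} and Property~\ref{obs:c-corners-c-prime-corners} together with the observation that recoloring between $0$ and $c'$ preserves $c$-corners. Your explicit identity $\corners{\extension} = 2\cornerscolor{c}{\extension} + 4$ on fully colored extensions is just a crisper packaging of the inequalities the paper manipulates in its two directions.
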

\begin{proof}
    We argue about \np-hardness and \np-membership separately.

    \mypar{\np-hardness.}
    We provide a polynomial-time many-one reduction from \probname{Restricted $c$-Corner Complexity} to \probname{Corner Complexity}.
    Let $(\grid, \ell)$ be an arbitrary instance of \probname{Restricted $c$-Corner Complexity}.
    We define $(\grid, 2\ell + 4)$ as the instance of \probname{Corner Complexity}.
    Clearly, this can be done in polynomial time, and the resulting instance has polynomial space w.r.t.~the original instance.
    We now show that $(\grid, \ell)$ is a yes-instance of \probname{Restricted $c$-Corner Complexity} iff~$(\grid, 2\ell + 4)$ is a yes-instance of \probname{Corner Complexity}.
    Observe that all grid corners of \grid are colored.
    Hence, the prerequisite of Lemma~\ref{lem:completely-color-grid} is fulfilled.
    Furthermore, we will again make use of the observation that for a coloring on two colors $c$ and $c'$, any extension that colors all white cells in $c'$ has the same number of $c$-corners as the initial coloring, but the number of $c'$-corners might change.
    
    \paragraph{$(\Rightarrow)$.}
    Assume $(\grid, \ell)$ is a yes-instance of \probname{Restricted $c$-Corner Complexity}.
    Let $\extension \in \extensionsrestricted{\{c, 0\}}{\grid}$ be an extension s.t.~$\cornerscolor{c}{\extension} \leq \ell$ holds.
    We color now every white cell in $c'$, which is equivalent to some extension $\extension' \in \extensionsrestricted{\{c, c'\}}{\grid}$.
    As observed, this will not affect the number of $c$-corners, and hence, we have $\cornerscolor{c}{\extension'} = \cornerscolor{c}{\extension}$, and in particular $\cornerscolor{c}{\extension'} \leq \ell$.
    As $\extension'$ colors the entire grid, we get by Property~\ref{obs:c-corners-c-prime-corners} that $\cornerswo{\extension'} \leq 2\ell$ must hold.
    Together with the four $c'$-corners at the corner of the grid, we get that $\corners{\extension'} \leq 2\ell + 4$ holds, i.e.,~$(\grid, 2\ell + 4)$ is a yes-instance of \probname{Corner Complexity}.
    
    \paragraph{$(\Leftarrow)$.}
    Assume $(\grid, 2\ell + 4)$ is a yes-instance of \probname{Corner Complexity}.
    Let $\extension$ be an extension of \grid s.t.~$\corners{\extension} \leq 2\ell + 4$ holds.
    As a consequence of Lemma~\ref{lem:completely-color-grid} (\grid has all grid corners colored by our construction), we can assume that $\extension$ colors the entire grid, and hence we assume $\extension  \in \extensionsrestricted{\{c, c'\}}{\grid}$.
    As \extension colors the entire grid, and the grid corners are $c'$ due to our reduction, we have $\cornerswo{\extension} \leq 2\ell$ and in particular due to Property~\ref{obs:c-corners-c-prime-corners} also $\cornerscolorwo{c}{\extension} \leq \ell$.
    Let $\extension' \in \extensionsrestricted{\{c, 0\}}{\grid}$ be an extension of \grid that is created based on $\extension$, i.e.,~whenever $\extension$ colors a cell in $c'$ that was white in \grid we leave it white in $\extension'$.
    Using the observation that coloring white cells in $c'$ does not alter the number of $c$-corners, we conclude that $\cornerscolorwo{c}{\extension'} \leq \ell$ holds.
    Hence, $(\grid, \ell)$ is a yes-instance of \probname{Restricted $c$-Corner Complexity}.

    \mypar{\np-membership.}
    We first observe that given an instance $(\grid, \ell)$ of \probname{Corner Complexity}, any extension \extension (valid or invalid) of \grid will be of the same size, as \grid is defined over a bounded $m \times n$ grid, i.e.,~has polynomial space.
    To check whether an extension \extension is a solution to \probname{Corner Complexity}, we can verify by iterating over the $mn$-cells of \extension that it is valid w.r.t.~the coloring \grid.
    Computing the number of corners $\corners{\extension}$ of \extension can be done by using the corner-counting-formula in time linear in \extension.
    We then simply check whether $\corners{\extension}$ is at most $\ell$; if so, \extension is a solution to \probname{Corner Complexity}; otherwise, it is not.\qed
\end{proof}

\section{Computing Low-Complexity Extensions}\label{sec:algorithms}
Despite the hardness result in Section~\ref{sec:np-hardness}, our goal is still to compute extensions with few corners. While the arithmetically simple corner counting function naturally lends itself to integer linear programming to find an optimal solution (see Appendix~\ref{apx:ilp}), in this section, we focus on dynamic programming (DP). 

\subsection{Exact Dynamic Programming Algorithm}
\label{sec:exact-algorithm}
A core observation utilized by our exact DP-algorithm is that once a $2\times2$-region is assigned a fixed coloring, the number of corners at the center of the $2\times2$-region will not change again. This property can be scaled up for rows: The number of corners between two consecutive rows is fixed once a coloring has been assigned.

For a coloring $\grid \in \colors_0^{m \times n}$ with $\colors = [k]$, we define $E: \colors_0^n \times [m+1] \to \mathbb{Z}_0^+$ as our dynamic programming table which stores the minimum number of corners for the top $i$ rows, given a row extension $\mmatrix{g} \in \extensions{\row{\grid}{i}}$ of the $i$-th row:
\begin{align}
    E(\mmatrix{g}, 1) &= \cornersrow{\rowempty}{\mmatrix{g}}, \\
    E(\mmatrix{g}, i) &= \min_{\mmatrix{h} \in \extensions{\row{\grid}{i-1}}} \left\{ E(\mmatrix{h}, i-1) + \cornersrow{\mmatrix{h}}{\mmatrix{g}} \right\}.
\end{align}

\begin{restatable}{lemma}{dpAlgorithmLemma}
    \label{lem:dp-algo}
    For any $\grid \in \colors_0^{m \times n}$ with $\oextension \in \oextensions{\grid}$, $E(\rowempty, m+1) = \corners{\oextension}$.
\end{restatable}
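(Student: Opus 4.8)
The plan is to read $E(\mmatrix{g}, i)$ as a genuine optimization quantity and prove, by induction on $i$, that it equals the least number of corners that can accumulate in the top $i$ rows of any extension whose $i$-th row is forced to be $\mmatrix{g}$ (that is, all corners strictly above the boundary between rows $i$ and $i+1$); the lemma then follows by evaluating this interpretation at the padded bottom row. The argument rests on one structural fact, \emph{corner locality}: by definition $\corners{\cdot}$ sums $\delta_c$ over all $2\times2$-regions of the white-padded grid, and each such region lies between two consecutive (padded) rows. Grouping these regions by their top row index, and recalling that rows $0$ and $m+1$ are the white padding $\rowempty$, yields for any extension $\extension$ the telescoping decomposition
\begin{align*}
    \corners{\extension} = \cornersrow{\rowempty}{\row{\nextension}{1}} + \sum_{i=1}^{m-1}\cornersrow{\row{\nextension}{i}}{\row{\nextension}{i+1}} + \cornersrow{\row{\nextension}{m}}{\rowempty}.
\end{align*}
This identity exhibits the additive, ``memoryless'' structure the DP exploits: each row pair contributes a term depending only on the two rows involved.

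Concretely, I would prove the invariant that for every $i\in[m]$ and every valid $\mmatrix{g}\in\extensions{\row{\grid}{i}}$,
\begin{align*}
    E(\mmatrix{g}, i) = \min\Bigl\{\, \cornersrow{\rowempty}{\row{\nextension}{1}} + \sum_{j=1}^{i-1}\cornersrow{\row{\nextension}{j}}{\row{\nextension}{j+1}} \ :\ \extension\in\extensions{\submmatrix{\grid}{[1,i]}{[1,n]}},\ \row{\nextension}{i}=\mmatrix{g} \,\Bigr\},
\end{align*}
the minimum ``prefix cost'' over extensions of the first $i$ rows ending in $\mmatrix{g}$. The base case $i=1$ is immediate: the only such extension is the single row $\mmatrix{g}$, whose prefix cost is $\cornersrow{\rowempty}{\mmatrix{g}}=E(\mmatrix{g},1)$. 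For the inductive step I use that extensions factorize row by row (each white cell is colored independently, with no cross-row constraints), so every extension of the first $i$ rows ending in $\mmatrix{g}$ arises uniquely by appending $\mmatrix{g}$ to an extension $\extension'$ of the first $i-1$ rows ending in some $\mmatrix{h}\in\extensions{\row{\grid}{i-1}}$, and its prefix cost equals that of $\extension'$ plus the fixed term $\cornersrow{\mmatrix{h}}{\mmatrix{g}}$. Pulling this additive term out of the inner minimization and applying the induction hypothesis to $E(\mmatrix{h},i-1)$ rewrites the minimum prefix cost as $\min_{\mmatrix{h}}\{E(\mmatrix{h},i-1)+\cornersrow{\mmatrix{h}}{\mmatrix{g}}\}$, which is exactly the defining recurrence of $E(\mmatrix{g},i)$.

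It remains to close the accounting at the bottom boundary. Unfolding the recurrence once more gives $E(\rowempty,m+1)=\min_{\mmatrix{h}\in\extensions{\row{\grid}{m}}}\{E(\mmatrix{h},m)+\cornersrow{\mmatrix{h}}{\rowempty}\}$; substituting the invariant at $i=m$ and again extracting the additive term, the right-hand side becomes the minimum, over all $\extension\in\extensions{\grid}$, of the full decomposition displayed above, i.e.\ $\min_{\extension\in\extensions{\grid}}\corners{\extension}$. Since $\oextension\in\oextensions{\grid}$ is by definition a minimizer of $\corners{\cdot}$, this equals $\corners{\oextension}$, proving $E(\rowempty,m+1)=\corners{\oextension}$.

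I expect the only delicate point to be the boundary bookkeeping: ensuring the top padding is charged exactly once by the base case and the bottom padding exactly once by the final $m+1$ step, so that the prefix-cost invariant lines up with the global corner count with neither omission nor double counting. The row-by-row factorization of extensions and the locality of $\delta_c$ make the inductive step itself mechanical.
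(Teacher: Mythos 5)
Your proof is correct and follows essentially the same route as the paper: an induction on $i$ with the invariant that $E(\mmatrix{g}, i)$ equals the minimum number of corners above row $i$ over all prefix extensions whose $i$-th row is fixed to $\mmatrix{g}$, closed off at the padded row $m+1$. The only notable difference is that where the paper justifies the lower-bound direction of its inductive step by appealing to Lemma~\ref{lem:add-row-not-decrease-corners}, you derive both directions directly from the explicit telescoping decomposition of $\corners{\cdot}$ into per-row-pair terms together with the row-by-row factorization of extensions, which is a slightly cleaner (and equally valid) way to make the same argument rigorous.
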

\begin{proof}
    We first want to show that $E$ stores the minimum number of corners above row $i$, for every row $i\in[m]$ and any arbitrary row extension $\mmatrix{g} \in \extensions{\row{\grid}{i}}$, when the coloring of row $i$ is defined by \mmatrix{g}. We prove this using induction.
    \begin{description}
        \item[Base ($i=1$)] Since $E(\mmatrix{g}, 1) = \cornersrow{\rowempty}{\mmatrix{g}}$, we count exactly the corners above row $i = 1$ along the top border of the grid with the first row fixed to coloring \mmatrix{g}. This must also be the minimum number of corners above row $i$ as all cells are fixed to a coloring. 
        \item[Step ($i = k+1$)] By the definition of the recurrence, we now consider every row extension $\mmatrix{h} \in \extensions{\row{\grid}{k}}$ and choose one such that~$E(\mmatrix{h}, k) + \cornersrow{\mmatrix{h}}{\mmatrix{g}}$ yields the smallest number of corners over all row extensions in $\extensions{\row{\grid}{k}}$. By our induction hypothesis, $E(\mmatrix{h}, k)$ yields the minimum number of corners above row $k$, with row $k$ fixed to~\mmatrix{h}. For fixed $\mmatrix{g} \in \extensions{\row{\grid}{k+1}}$ and $\mmatrix{h} \in \extensions{\row{\grid}{k}}$, $\cornersrow{\mmatrix{h}}{\mmatrix{g}}$ is a fixed number. Thus $E(\mmatrix{h}, k) + \cornersrow{\mmatrix{h}}{\mmatrix{g}}$ is never larger than the minimum number of corners above row~$k+1$.
    
        Observe that by Lemma~\ref{lem:add-row-not-decrease-corners} adding row \mmatrix{g} after row $k$ can not decrease the number of corners. Therefore, $E(\mmatrix{h}, k) + \cornersrow{\mmatrix{h}}{\mmatrix{g}}$ is never smaller than the minimum number of corners above row~$k$, plus the corners between~$\mmatrix{g}$ and~$\mmatrix{h}$. Thus, finding $E(\mmatrix{g}, k+1)$ yields exactly the minimum number of corners above row $k+1$ with row $k+1$ fixed to $\mmatrix{g}$.
    \end{description}
    Thus, $E(\rowempty, m+1)$ yields the minimum number of corners over all extension for rows above row $m+1$, with the coloring of row $m+1$ being all white. Since we are evaluating $\Delta$ for extensions $\extension \in \extensions{\grid}$, we consider $\extension^P$: Everything outside the $(m\times n)$-grid corresponding to \extension is white, and hence $E(\rowempty, m+1)$ yields exactly $\corners{\oextension}$ for $\oextension \in \oextensions{\grid}$.\qed
\end{proof}
For each of the $m$ rows, there are at most $(k+1)^n$ extensions, and every combination of two rows is checked in $\bigO{n}$ time per pair.
As the recursion references only the previous row, at most two rows at a time need to be stored~in~$E$. 

\begin{restatable}{lemma}{dpAlgorithmComplexityLemma}
    \label{lem:dp-algo-complexity}
    For any $\grid \in \colors_0^{m \times n}$ with $\colors = [k]$, $E(\rowempty, m+1)$ can be computed in $\bigO{(k+1)^{2n}mn}$ time using $\bigO{(k+1)^n}$ space.
\end{restatable}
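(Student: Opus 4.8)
The plan is to multiply three bounds and read off the space requirement from the locality of the recurrence: the number of candidate colorings $\mmatrix{g} \in \extensions{\row{\grid}{i}}$ stored per row index, the time to evaluate $\cornersrow{\mmatrix{h}}{\mmatrix{g}}$ for a single pair of rows, and the number of table entries that must be filled. First I would bound the number of row extensions. A row $\row{\grid}{i}$ lies in $\colors_0^n$; each colored cell is fixed to its color, while each white cell may independently take any of the $|\colors_0| = k+1$ colors. Hence $|\extensions{\row{\grid}{i}}| \le (k+1)^n$, which is exactly the number of distinct first arguments for which the table can hold a value at a fixed index $i$.

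Next I would argue that $\cornersrow{\mmatrix{h}}{\mmatrix{g}}$ is computable in $\bigO{n}$ time. By definition this quantity sums, over all colors $c \in \colors$ and over the $n+1$ padded $2 \times 2$-regions between the two rows, the values $\delta_c\left(\submmatrix{\ngrid^P}{[i, i+1]}{[j, j+1]}\right)$. A naive reading of the definition suggests $\bigO{kn}$, because of the outer sum over all $k$ colors. The observation that brings this down to $\bigO{n}$ is that within a single $2 \times 2$-region only the (at most four) colors that actually appear among its four cells can yield a nonzero $\delta_c$; every other color contributes $0$. Thus the total corner contribution of one region is obtained in $\bigO{1}$ time by inspecting only its four cells, and summing over the $n+1$ regions gives $\bigO{n}$ per pair of rows.

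Finally I would combine these bounds. For each index $i$ with $2 \le i \le m$ and each of the at most $(k+1)^n$ extensions $\mmatrix{g}$ of $\row{\grid}{i}$, the entry $E(\mmatrix{g}, i)$ is a minimum over the at most $(k+1)^n$ extensions $\mmatrix{h}$ of the previous row, and each term of the minimum costs $\bigO{n}$. This yields $\bigO{(k+1)^{2n} n}$ per index; the base case $i = 1$ (a single sweep computing $\cornersrow{\rowempty}{\mmatrix{g}}$ for each $\mmatrix{g}$) and the single final evaluation $E(\rowempty, m+1)$ are cheaper. Summing over all $m+1$ indices gives $\bigO{(k+1)^{2n} mn}$ total time. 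For the space bound I would note that the recurrence at index $i$ references only index $i-1$, so it suffices to keep the current and the previous layer in memory and discard all earlier ones; each layer holds at most $(k+1)^n$ entries, for $\bigO{(k+1)^n}$ space.

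The main obstacle is the $\bigO{n}$ bound on evaluating $\cornersrow{\cdot}{\cdot}$: the rest is routine counting, but that bound depends on the observation that, per $2 \times 2$-region, only the constantly many colors present can contribute, so the sum over all $k$ colors collapses to a constant-time computation and does not introduce an extra factor of $k$.
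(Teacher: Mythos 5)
Your proposal is correct and follows essentially the same route as the paper's proof: at most $(k+1)^n$ extensions per row, $\bigO{n}$ time per pair of row extensions, a minimum over the previous layer for each of the $(k+1)^n$ entries of the current layer across $m$ rows, and retention of only the last two layers for the space bound. Two minor remarks: your collapse of the sum over all $k$ colors to the at most four colors actually present in each $2\times2$-region is \emph{more} careful than the paper, which simply declares the $\bigO{n}$ pair evaluation trivial; conversely, on space the paper adds a detail you gloss over --- table entries should be plain corner counts stored in an array indexed by the base-$(k+1)$ encoding of the row coloring (with $\infty$ at indices corresponding to invalid extensions), since storing each row coloring explicitly as a key alongside its value would incur an extra factor of $n$ beyond the claimed $\bigO{(k+1)^n}$ bound.
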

\begin{proof}
    We observe that each white cell can be assigned to one of the $k+1$ colors in $\colors_0$.
    Thus, we have at most $(k+1)^n$ valid extensions for a single row.
    For some row extension $\mmatrix{g} \in \extensions{\row{\grid}{i}}$ of row $i$, we need to check with every row extension $\mmatrix{h} \in \extensions{\row{\grid}{i-1}}$ of row $i-1$, if pairing \mmatrix{h} with \mmatrix{g} yields the minimum number of corners above row $i$.
    Counting the corners between \mmatrix{g} and \mmatrix{h} trivially takes $\bigO{n}$ time.
    Since $|\extensions{\row{\grid}{i-1}}| \in \bigO{(k+1)^n}$, computing a single table entry takes $\bigO{(k+1)^nn}$ time.
    As we have at most $(k+1)^n$ valid extensions per row, and $m$ rows, computing all entries takes $\bigO{(k+1)^{2n}mn}$ time. 
    Since $E(\rowempty, m+1)$ is one of the table entries, we get that computing the number of corners of a minimum corner extension of \grid takes $\bigO{(k+1)^{2n}mn}$ time. 

    Regarding space, we fill in the table row by row, starting with row 1 until we end up at row $m$. Notice that row $m+1$ is special, as we have to consider only $\mmatrix{g} = \rowempty$. We observe that it is sufficient to only store the entries for extensions of the last two processed rows. This is because the recurrence for some row $i>1$ only references table entries of the previous row. Thus, only $\bigO{(k+1)^n}$ table entries are required to be stored at any given time. Notice that it is sufficient to only store the number of corners without the row coloring. Since we can encode each row coloring as a $n$-digit number in base $(k+1)$, we can simply order the entries of some row $i$ s.t.~its index maps to the row coloring. By doing so, we would not be able to skip entries where the row coloring is not a valid extension of row $i$. However, we can solve this problem by simply storing $\infty$ corners at these invalid entries. Therefore, we have a total space consumption of $\bigO{(k+1)^n}$.\qed
\end{proof}
By additionally storing per table entry the previous row colorings that led to the minimum number of corners, we enable the DP-algorithm to output a minimum corner extension. This increases space usage to $\bigO{(k+1)^nmn}$.

\subsection{Approximation Algorithm}
\label{sec:approximation-algorithm}
Using an alternative dynamic programming algorithm we can approximate the optimal solution in polynomial time. We leverage the observation that in an optimal extension $\oextension \in \oextensions{\grid}$, there are often identical consecutive rows. 

Let $\grid \in \colors_0^{m \times n}$ be a coloring of $\colors = [k]$. We define $A: [0,m] \to \mathbb{Z}_0^+$ as our dynamic programming table, which for each row $i \in [m]$ stores an approximation of the number of corners for a minimum corner extension of rows $1$ to $i$:
\begin{align}
    A(0) &= 0, \\
    A(i) &= \min_{j \in [0,i-1]}\left\{A(j) + \Delta\left(\eta\left(\rowsmerge{\grid}{[j+1,i]}\right)\right)\right\}.
    \label{eq:apx-algo-2}
\end{align}

\begin{restatable}{lemma}{apxAlgorithmLemma}
     \label{lem:apx-algo}
    For $\grid \in \colors_0^{m \times n}$ with $\oextension \in \oextensions{\grid}$, $\corners{\oextension} \le A(m) \le \frac{1}{2} \left(\corners{\oextension}\right)^2$. 
\end{restatable}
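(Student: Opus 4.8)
The plan is to read $A(m)$ as the minimum, over all partitions of the row indices $[1,m]$ into consecutive blocks $B_1,\dots,B_p$, of the quantity $\sum_{t=1}^{p}\corners{\eta(\rowsmerge{\grid}{B_t})}$; this is exactly what the partition recurrence~\eqref{eq:apx-algo-2} computes. In particular $A(m)$ is at most the value of any single partition I care to exhibit, and equal to the value of a minimizing partition. I would prove the two inequalities separately.

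\textbf{Lower bound.} Let $B_1,\dots,B_p$ be a partition realizing $A(m)$ and write $\mmatrix{g}_t:=\eta(\rowsmerge{\grid}{B_t})$. I turn it into an extension $\extension\in\extensions{\grid}$ by setting every row of block $B_t$ equal to $\mmatrix{g}_t$. This $\extension$ is valid: the all-singletons partition already has finite value, so the minimizer does too and no block merges to $\bot$; wherever $\grid$ is colored the merge and then $\eta$ preserve that color, while white cells of $\grid$ may be recolored freely. Inside each block the rows are identical, so by Property~\ref{prop:no-corners-between-two-identical-rows} they contribute nothing, whence $\corners{\extension}=\cornersrow{\rowempty}{\mmatrix{g}_1}+\sum_{t=1}^{p-1}\cornersrow{\mmatrix{g}_t}{\mmatrix{g}_{t+1}}+\cornersrow{\mmatrix{g}_p}{\rowempty}$. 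Inserting a white row between $\mmatrix{g}_t$ and $\mmatrix{g}_{t+1}$ and invoking Lemma~\ref{lem:add-row-not-decrease-corners} gives $\cornersrow{\mmatrix{g}_t}{\mmatrix{g}_{t+1}}\le\cornersrow{\mmatrix{g}_t}{\rowempty}+\cornersrow{\rowempty}{\mmatrix{g}_{t+1}}$; telescoping collapses the bound to $\sum_{t=1}^{p}(\cornersrow{\rowempty}{\mmatrix{g}_t}+\cornersrow{\mmatrix{g}_t}{\rowempty})=\sum_{t=1}^{p}\corners{\mmatrix{g}_t}=A(m)$. Since $\oextension$ is optimal and $\extension\in\extensions{\grid}$, this yields $\corners{\oextension}\le\corners{\extension}\le A(m)$.

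\textbf{Upper bound.} Now I feed $A(m)$ the partition induced by the maximal runs of identical rows of $\oextension$. If block $B_t=[a,b]$ carries a constant row $\mmatrix{r}_t$ in $\oextension$ (rows $a$ through $b$ of $\oextension$ all equal $\mmatrix{r}_t$), then each of $\row{\grid}{a},\dots,\row{\grid}{b}$ is extended by $\oextension$ to $\mmatrix{r}_t$, so $\mmatrix{r}_t\in\extensions{\rowsmerge{\grid}{B_t}}$ and the merge is defined. Property~\ref{prop:eta-corners-less-than-row-extension-corners} then gives $\corners{\eta(\rowsmerge{\grid}{B_t})}\le\corners{\mmatrix{r}_t}$, and Lemma~\ref{lem:corners-in-row-upper-bound-corners} gives $\corners{\mmatrix{r}_t}\le\corners{\oextension}$. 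Two observations cut the sum down: a \emph{white} block has $\mmatrix{r}_t=\rowempty$, hence $\rowsmerge{\grid}{B_t}=\rowempty$ and contributes $0$; and if $p'$ denotes the number of non-white blocks, then $A(m)\le\sum_{t}\corners{\eta(\rowsmerge{\grid}{B_t})}\le p'\cdot\corners{\oextension}$. It remains to bound $p'$.

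To do so I would count, in the sequence of rows of $\oextension$ padded above and below by the white row $\rowempty$, the transitions between distinct consecutive rows: by Lemma~\ref{lem:2_corners_between_two_rows} each costs at least two corners, so $\corners{\oextension}\ge 2(q-1)$ where $q$ is the number of maximal runs. Each non-white run of $\oextension$ survives the padding as its own run, and (when $p'\ge1$) the top and bottom padding form at least two further, distinct white runs, so $q\ge p'+2$ and therefore $\corners{\oextension}\ge 2(p'+1)$, i.e.\ $p'\le\tfrac12\corners{\oextension}-1$. Combining, $A(m)\le p'\cdot\corners{\oextension}\le(\tfrac12\corners{\oextension}-1)\,\corners{\oextension}\le\tfrac12(\corners{\oextension})^2$, with the all-white case $\corners{\oextension}=0$ immediate. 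I expect the upper bound to be the crux: the non-obvious points are that white blocks drop out of the sum entirely, and that the per-transition lower bound of Lemma~\ref{lem:2_corners_between_two_rows}, together with the two padding runs, is exactly tight enough to force $p'\le\tfrac12\corners{\oextension}$, which is precisely what produces the quadratic guarantee.
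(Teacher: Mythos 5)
Your proof is correct and follows essentially the same route as the paper's: the upper bound uses the partition of $\oextension$ into maximal runs of identical rows, bounding each block via Property~\ref{prop:eta-corners-less-than-row-extension-corners} and Lemma~\ref{lem:corners-in-row-upper-bound-corners} and the number of contributing blocks by $\tfrac12\corners{\oextension}$ via Lemma~\ref{lem:2_corners_between_two_rows}, while the lower bound converts the minimizing partition into a valid extension with at most $A(m)$ corners and invokes optimality of $\oextension$. If anything, your write-up is slightly tighter where the paper is loose: you handle all-white runs directly instead of preprocessing them away, and you justify the inter-block accounting explicitly by inserting a white row and telescoping via Lemma~\ref{lem:add-row-not-decrease-corners}, a step the paper only gestures at.
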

\begin{proof}
    Suppose $\grid = \rowempty$. Then $\oextension = \grid$, and $\corners{\oextension} = \rowempty$. Simultaneously, since $\rowsmerge{\row{\grid}{[i,j]}}{} = \rowempty$ for all $i\in[m]$ and $j\in[i,m]$, we can deduce that $A(i) = 0$ for all $i\in[m]$, implying that $A(m) = 0$. Hence, $\corners{\oextension} \le A(m) \le \frac{1}{2} \left(\corners{\oextension}\right)^2$ trivially holds for $\grid = \rowempty$. 

    Now suppose that $\grid \neq \rowempty$. Let $\oextension \in \oextensions{\grid}$ be a minimum corner extension of coloring \grid s.t.~$\row{\grid}{i} \neq \rowempty$ for all $i \in [m]$. Due to Property~\ref{prop:no-corners-between-two-identical-rows}, we know that such a minimum corner extension must exist since we can recursively replace any empty row with a neighboring colored row and not increase the number of corners. We define a function $\lambda: [m] \to [m]$ which assigns each row in \oextension a new index. The function ensures that identical consecutive rows in $\oextension$ are assigned the same index, while the indices of two consecutive distinct rows are also consecutive.
    \begin{align*}
        \lambda(i) &= \begin{cases}
            1 & \text{ if } i = 1\\
            \lambda(i-1) & \text{ if } \row{\oextension}{i-1} = \row{\oextension}{i}\\
            \lambda(i-1) + 1 & \text{ otherwise }
        \end{cases}
    \end{align*}
    Let $\extension' \in \colors^{m' \times n}$ be a coloring after removing all rows $i \in [2,m]$ for which $\lambda(i) = \lambda(i-1)$ from $\extension^*$. In other words, we are leaving one witness for each set of identical consecutive rows in \oextension. Clearly, $m' \le m$. Furthermore, since we are simply removing rows from \oextension in order to construct $\extension'$, we know by Lemma~\ref{lem:remove-row-not-increase-corners} that $\corners{\extension'} \le \corners{\oextension}$. Observe that any two consecutive rows in $\extension'$ are distinct. As per Lemma~\ref{lem:2_corners_between_two_rows} this implies that any two consecutive rows have at least $2$ corners between them. Additionally, we know that the first and last row each have at least 2 corners along the grid boundary. Consequently, we can deduce the inequality $2 m' \le 2 (m' + 1) \le \corners{\extension'} \le \corners{\oextension}$, which, equivalently, means that the number~$m'$ of rows in $\extension'$ is upper bounded by $m' \le \frac{1}{2} \corners{\oextension}$.

    Now, let us consider a slightly different definition for the dynamic programming table, which we will denote as $A'$.
    \begin{align}
        A'(0) &= 0 \\
        A'(i) &= A'(\gamma(i) - 1) + \Delta\left(\eta\left(\rowsmerge{\grid}{[\gamma(i),i]}\right)\right)
    \end{align}
    with $\gamma$ being the first row index of each set of identical consecutive rows in $\extension^*$, i.e.
    \begin{align*}
        \gamma(i) &= \begin{cases}
            1 & \text{ if } i = 1\\
            \gamma(i-1) & \text{ if } \row{\oextension}{i-1} = \row{\oextension}{i}\\
            i & \text{ otherwise }
        \end{cases}
    \end{align*}
    Observe that $A(i) \le A'(i)$, with $A$ being the original DP table, for all $i \in [m]$, since $\gamma(i)-1$ is a representative of one $j \in [0, i-1]$ in the $\min$-function of $A$. 
    Furthermore, we can see that we are merging exactly the identical consecutive rows of \oextension in $A'(m)$. Thus, when merging rows $\gamma(i)$ to $i$ for some $i \in [m]$, we know that they are clearly mergeable, i.e.~$\rowsmerge{\grid}{[f(i),i]} \neq \bot$. It must also be the case that $\Delta\left(\eta\left(\rowsmerge{\grid}{[\gamma(i),i]}\right)\right) \le \corners{\row{\extension'}{\lambda(i)}}$ by Property~\ref{prop:eta-corners-less-than-row-extension-corners}. Thus, by Lemma~\ref{lem:corners-in-row-upper-bound-corners} we can conclude that $\Delta\left(\eta\left(\rowsmerge{\grid}{[\gamma(i),i]}\right)\right) \le \corners{\row{\extension'}{\lambda(i)}} \le \corners{\extension'} \le \corners{\oextension}$ for any $i \in [m]$. Putting everything together, we can deduce the following.
    \begin{align*}
        A(m) &\le A'(m)\\
        &\le m' \cdot \corners{\oextension}\\
        &\le \frac{1}{2} (\corners{\oextension})^2
    \end{align*}

    Lastly, we need to show that $\corners{\oextension} \le A(m)$ for any minimum corner extension $\oextension \in \oextensions{\grid}$. Consider a version of $A$ where we keep track of the merged rows for one minimum solution. If $A(m) \neq \infty$ then all rows that have been merged together were clearly mergeable (otherwise $\corners{\oextension} \le \infty = A(m)$ trivially holds, though this case will never happen). Consider an arbitrary sequence of merged rows $\mmatrix{g} = \rowsmerge{\grid}{[i,j]}$ from row $i$ to $j$ in $A(m)$. We can extend each row between $i$ and $j$ with the coloring of $\eta(\mmatrix{g})$. This procedure can be applied to every sequence of merged rows, resulting in a valid extension $\extension$ of $\grid$. Observe that due to Property~\ref{prop:no-corners-between-two-identical-rows} and Lemma~\ref{lem:corners-in-row-upper-bound-corners}, the number of corners of $\extension$ is at most $A(m)$. Since $\extension$ is a valid extension of $\grid$, it can not have fewer corners than the optimal solution $\oextension$. Thus, $\corners{\oextension} \le \corners{\extension} \le A(m)$ holds.\qed
\end{proof}
Each entry in the dynamic programming table can be computed in $\bigO{mn}$ time by iteratively merging the rows inside the $\min$-function of Equation \ref{eq:apx-algo-2}.

\begin{restatable}{lemma}{apxAlgorithmComplexityLemma}
    \label{lem:apx-algo-prop}
    For any $\grid \in \colors_0^{m \times n}$, $A(m)$ can be computed in $\bigO{m^2n}$ time using $\bigO{n+m}$ additional space.
\end{restatable}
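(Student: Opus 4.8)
The plan is to establish the running time and space bounds for computing $A(m)$ directly from the recurrence in Equation~\ref{eq:apx-algo-2}. First I would observe that the table $A$ has exactly $m+1$ entries, indexed from $0$ to $m$, and that these can be filled in order of increasing $i$, since $A(i)$ depends only on entries $A(j)$ with $j < i$. The core of the argument is to bound the cost of computing a single entry $A(i)$. Naively, the $\min$ ranges over all $j \in [0, i-1]$, and each term requires computing $\Delta(\eta(\rowsmerge{\grid}{[j+1,i]}))$, which on its own takes $\bigO{n}$ time to evaluate via $\delta_c$ once the merged row is known. The key efficiency observation is that we do not recompute each merged row from scratch: for fixed $i$, as $j$ decreases from $i-1$ down to $0$, the merged row $\rowsmerge{\grid}{[j+1,i]}$ is obtained from $\rowsmerge{\grid}{[j+2,i]}$ by a single additional merge with $\row{\grid}{j+1}$, using the associativity of $\oplus$ (the stated Property). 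Hence I would iterate $j$ downward, maintaining the current merged row incrementally.

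The main step is therefore the amortized analysis of computing $A(i)$. Maintaining the merged row costs $\bigO{n}$ per decrement of $j$ (one merge of two length-$n$ rows, which may yield $\bot$, in which case all smaller $j$ also yield $\bot$ and the inner loop can terminate early). For each candidate $j$, evaluating $\Delta(\eta(\cdot))$ on the current merged row also costs $\bigO{n}$: applying $\eta$ is a single left-to-right sweep, and $\Delta$ of a single row is computed in linear time using the corner-counting formula. Since there are at most $i \le m$ values of $j$ and each contributes $\bigO{n}$ work, computing $A(i)$ takes $\bigO{mn}$ time. Summing over all $m$ entries gives the claimed $\bigO{m^2 n}$ total running time.

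For the space bound, I would argue that the full table $A$ requires only $\bigO{m}$ storage, one nonnegative integer per index. The only other working memory is the incrementally maintained merged row together with its $\eta$-extension, each a single row of length $n$, costing $\bigO{n}$ space; these are reused across iterations and across different values of $i$, so they do not accumulate. Thus the additional space beyond the input is $\bigO{n + m}$, as claimed. (If one wishes to reconstruct an actual extension rather than just its cost, one would store per-entry back-pointers, but the lemma only asserts the cost, so $\bigO{m}$ for the table suffices.)

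I do not expect a genuine obstacle here, as the statement is a routine complexity analysis; the one point requiring care is justifying that the merged row can be maintained incrementally rather than recomputed, which relies on associativity of $\oplus$ and on the fact that once a merge produces $\bot$, all further extensions of that row range also produce $\bot$, permitting the inner loop to break. Making explicit that $\eta$ and $\Delta$ on a single row each cost only $\bigO{n}$ — so that the per-candidate cost truly is linear and not, say, quadratic from naively rebuilding the merge — is the detail most worth stating cleanly.
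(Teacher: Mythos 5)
Your proposal is correct and matches the paper's proof essentially step for step: the paper's Algorithm~\ref{alg:apx-dp-table-entry} is exactly your downward loop over $j$ that maintains the merged row incrementally via $\oplus$, with $\bigO{n}$ work per candidate (merge, $\eta$, corner count), giving $\bigO{mn}$ per entry, $\bigO{m^2n}$ total, and $\bigO{m+n}$ additional space. Your extra remark on early termination once the merge yields $\bot$ is a harmless optimization the paper does not need, since it defines $\rowmerge{g}{\bot} = \bot$ and $\corners{\bot} = \infty$ so the loop simply continues.
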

\begin{proof}
    We aim to prove that each entry of the dynamic programming table $A(i)$, $i\in[m]$, can be computed in $\bigO{mn}$ time. For $i=0$ this clearly holds. Now consider some arbitrary $i > 0$. The value of $A(i)$ can be computed using the procedure presented in Algorithm~\ref{alg:apx-dp-table-entry}, when all values $A(j)$ for $0 \leq j < i$ have been computed before. Therefore, we compute each $A(i)$ in ascending order of~$i$.
    
    \begin{algorithm}
        \caption{Computing entry $i$ of $A$ for coloring 
        $\grid$}
        \label{alg:apx-dp-table-entry}
        \KwIn{$i \in [k]$}
        \KwOut{$F(i)$}
        $j \gets i - 1$\\
        $\mmatrix{g} \gets \rowempty$\\
        $d \gets \infty$\\
        \While{$0 \le j$}{
            $\mmatrix{g} \gets \rowmerge{\mmatrix{g}}{\row{\grid}{j+1}}$\\
            $d \gets \min \{ A(j) + \Delta(\eta(\mmatrix{g})), d\}$\\
            $j \gets j - 1$
        }
        \Return $d$
    \end{algorithm}
    
    Observe that the result of the algorithm is equivalent to Equation~\ref{eq:apx-algo-2} of the definition of $A$. In each iteration of the $\mathbf{while}$-loop, we merge two rows, apply $\eta$ on the result, and finally count the number of corners of that row. Each of these operations takes $\bigO{n}$ time, where $n$ is the number of columns. All other operations in the loop take constant time. Thus, each iteration requires $\bigO{n}$ time. As we have at most $m$ iterations of the $\mathbf{while}$-loop, where $m$ is the number of rows, the total running time for computing $A(i)$ is $\bigO{mn}$. Subsequently, computing all $m+1$ entries, including $A(m)$, takes $\bigO{m^2n}$ time. Thus, the approximation of the optimal solution is also computed in $\bigO{m^2n}$ time.

    Considering the space requirement, we observe that each entry in the dynamic programming table needs to store only an integer representing the number of corners needed up to a specific row~$i$. Since there are $m+1$ entries in total, storing all of them requires $\bigO{m}$ space. On another note, when computing an entry using Algorithm~\ref{alg:apx-dp-table-entry}, we use $\bigO{n}$ space for storing the merged row \mmatrix{g}. Therefore, we only use additional $\bigO{m+n}$ space to the input size.\qed
\end{proof}

\section{Parameterized Complexity}\label{sec:param-complexity}
We now investigate the complexity of \MinCornerComplexity with respect to the number of colored cells, and to the number of corners in the solution.

\subsection{\textsf{FPT} in the Number of Colored Cells due to Kernelization}
\label{sec:fpt}
We propose a kernelization procedure for our problem that involves the exhaustive application of the following two kernelization rules on a coloring $\grid \in \colors_0^{m \times n}$. 
\begin{description}
    \item[Rule 1:] If there is an empty row or column, remove it from the grid.
    \item[Rule 2:] If there are two consecutive rows or columns that only contain cells of a singular color $c\in[k]$ and white ($0$), merge them.
\end{description}

We denote the resulting coloring by $\grid'$. To show that both rules are \emph{safe}, we prove that the number of corners in optimal solutions of $\grid$ and $\grid'$ are equal. 

To show the safety of Rule 1, we can utilize Lemma~\ref{lem:remove-row-not-increase-corners}, which states that removing rows does not increase corners, and Property~\ref{prop:no-corners-between-two-identical-rows}, which observes that there are no corners between two identical rows. Safety of Rule 2 can be shown by similar, slightly more sophisticated arguments.

\begin{restatable}{lemma}{kernelizationRuleOneTwoLemma}
    \label{lem:kernelization_rule_1_2}
    Kernelization rules 1 and 2 are safe.
\end{restatable}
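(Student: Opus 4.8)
The plan is to prove both rules \emph{safe} by showing that the optimum is invariant, i.e.\ that $\corners{\oextension} = \corners{\oextension'}$ for $\oextension \in \oextensions{\grid}$ and $\oextension' \in \oextensions{\grid'}$. I will establish the two inequalities separately; by symmetry it suffices to argue for rows, as columns are analogous. Throughout, the workhorses are Lemma~\ref{lem:remove-row-not-increase-corners} (removing a row cannot increase corners), its converse Lemma~\ref{lem:add-row-not-decrease-corners} together with the sub-additivity of $\overline{\Delta}$ established in its proof, and Property~\ref{prop:no-corners-between-two-identical-rows} (two identical consecutive rows contribute no corners between them).

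\textbf{Rule 1.} Let row $i$ of $\grid$ be empty, so $\grid' = \rowremove{\grid}{i}$. For $\corners{\oextension'} \le \corners{\oextension}$, delete row $i$ from an optimal extension $\oextension$ of $\grid$: by Lemma~\ref{lem:remove-row-not-increase-corners} the result is an extension of $\grid'$ with no more corners. For the reverse inequality, take an optimal extension $\oextension'$ of $\grid'$ and reinsert at position $i$ a copy of an adjacent row of $\oextension'$. Since the empty row admits every coloring this is a valid extension of $\grid$, and by Property~\ref{prop:no-corners-between-two-identical-rows} the duplicated row introduces no corners against its identical neighbour while leaving all other corners untouched, so the corner count is unchanged. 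Hence the two optima coincide.

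\textbf{Rule 2, easy direction.} Let $\mmatrix{a} = \row{\grid}{i}$ and $\mmatrix{b} = \row{\grid}{i+1}$ be the two rows, each using only color $c$ and white, and let $\mmatrix{m} = \rowmerge{a}{b}$; this merge is defined precisely because no column carries two distinct non-white colors. To see $\corners{\oextension} \le \corners{\oextension'}$, take an optimal extension $\oextension'$ of $\grid'$ and let $\mmatrix{r}$ be the coloring it assigns to the merged row. As $\mmatrix{r}$ extends $\mmatrix{m}$, it carries $c$ wherever $\mmatrix{a}$ or $\mmatrix{b}$ does, hence $\mmatrix{r}$ extends both $\mmatrix{a}$ and $\mmatrix{b}$. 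Replacing the single merged row by two copies of $\mmatrix{r}$ therefore yields a valid extension of $\grid$, and by Property~\ref{prop:no-corners-between-two-identical-rows} the two copies contribute nothing between them while the corners to the neighbouring rows are preserved, so the corner count equals $\corners{\oextension'}$.

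\textbf{Rule 2, hard direction and the main obstacle.} It remains to show $\corners{\oextension'} \le \corners{\oextension}$. I would start from an optimal extension $\oextension$ of $\grid$ with rows $\mmatrix{p} = \row{\oextension}{i}$, $\mmatrix{q} = \row{\oextension}{i+1}$ and neighbours $\mmatrix{f} = \row{\oextension}{i-1}$, $\mmatrix{s} = \row{\oextension}{i+2}$, and construct a single common row $\mmatrix{r}$ that (i) extends $\mmatrix{m}$ and (ii) satisfies $\cornersrow{\mmatrix{f}}{\mmatrix{r}} + \cornersrow{\mmatrix{r}}{\mmatrix{s}} \le \cornersrow{\mmatrix{f}}{\mmatrix{p}} + \cornersrow{\mmatrix{p}}{\mmatrix{q}} + \cornersrow{\mmatrix{q}}{\mmatrix{s}}$. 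Replacing rows $i$ and $i+1$ of $\oextension$ by two copies of such an $\mmatrix{r}$ keeps the extension optimal and makes the two rows identical; deleting one copy (Lemma~\ref{lem:remove-row-not-increase-corners} together with Property~\ref{prop:no-corners-between-two-identical-rows}) then yields an extension of $\grid'$ with at most $\corners{\oextension}$ corners. Applying the sub-additivity of $\overline{\Delta}$ twice reduces (ii) to finding $\mmatrix{r}$ lying \emph{between} $\mmatrix{p}$ and $\mmatrix{q}$, i.e.\ with $\cornersrow{\mmatrix{p}}{\mmatrix{r}} + \cornersrow{\mmatrix{r}}{\mmatrix{q}} = \cornersrow{\mmatrix{p}}{\mmatrix{q}}$. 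The crux is that $\mmatrix{r}$ must be forced to $c$ on the \emph{whole} union of the $c$-cells of $\mmatrix{a}$ and $\mmatrix{b}$ while still realising this betweenness equality. I expect this to be the main obstacle, and plan to resolve it by a column-wise choice: at every forced cell at least one of $\mmatrix{p},\mmatrix{q}$ already carries $c$ (that column is fixed to $c$ in $\mmatrix{a}$ or in $\mmatrix{b}$), which is exactly the single-color property of Rule~2, and at the remaining free cells $\mmatrix{r}$ can copy whichever of $\mmatrix{p},\mmatrix{q}$ preserves the per-column contribution to $\delta_c$. Verifying that this attains the geodesic equality for every color in every $2\times 2$ region is the technical heart of the argument.
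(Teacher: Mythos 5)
Your handling of Rule~1 and of the easy direction of Rule~2 is correct and coincides with the paper's proof (the paper phrases both via a contradiction argument, but the constructions---deleting the empty row, respectively duplicating the merged row and invoking Property~\ref{prop:no-corners-between-two-identical-rows}---are the same as yours). Your reduction of the hard direction to finding a row $\mmatrix{r}$ that extends $\mmatrix{m}$ and is geodesically between $\mmatrix{p}$ and $\mmatrix{q}$ is also sound: together with the sub-additivity of $\overline{\Delta}$ it does imply inequality~(ii). However, the existence of such an $\mmatrix{r}$ \emph{is} the entire content of the lemma, and your sketch of its construction is not only unproven (as you concede) but incorrect as stated, for two reasons. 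First, a between-row cannot in general be chosen with $r_j \in \{p_j, q_j\}$ at free columns: at a column where $p_j \neq q_j$ and neither value is $c$, the correct value is \emph{white}, which is typically neither $p_j$ nor $q_j$. Second, your selection criterion inspects only color $c$ (the contribution to $\delta_c$), whereas the failures occur at the \emph{other} colors. Concretely, take $\colors = \{c,z,w\}$, Rule-2 rows $\mmatrix{a} = (0,0)$, $\mmatrix{b} = (c,0)$, and valid extension rows $\mmatrix{p} = (z,z)$, $\mmatrix{q} = (c,w)$; nothing in your argument uses optimality to exclude this configuration. Column~1 forces $r_1 = c$, and at the free column~2 the two candidates $(c,z)$ and $(c,w)$ have identical $c$-cells, so your criterion cannot distinguish them; yet $(c,z)$ violates betweenness, since $\cornersrowcolor{z}{\mmatrix{p}}{(c,z)} + \cornersrowcolor{z}{(c,z)}{\mmatrix{q}} = 2 + 2 = 4 > 2 = \cornersrowcolor{z}{\mmatrix{p}}{\mmatrix{q}}$, while $(c,w)$ and $(c,0)$ both satisfy it.

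The paper closes exactly this gap with an explicit, symmetric, column-wise merge operator $\otimes^c$: $(\mmatrix{p} \otimes^c \mmatrix{q})_j$ equals $p_j$ if $p_j = q_j$, equals $c$ if $p_j = c$ or $q_j = c$, and is white otherwise. It then substitutes the resulting algebraic identities (for color $c$ the merged indicator is the OR $p_j^{=c} + q_j^{=c} - p_j^{=c}q_j^{=c}$, for every other color the AND $p_j^{=z}q_j^{=z}$) into the $\delta_z$-expressions and verifies, by a finite case analysis over all assignments in a $2\times2$ window, the per-column-pair inequality in which the merged row replaces $\mmatrix{p},\mmatrix{q}$ between the fixed neighboring rows; this inequality is weaker than your betweenness equality but suffices for the replacement argument. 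One can in fact check that $\otimes^c$ also satisfies your stronger per-pair, per-color betweenness, so your framework is completable---but only with this specific construction (in particular, white at conflict columns) and its accompanying case analysis, which is precisely the part your proposal leaves open.
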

\begin{proof}
    We will first prove the safety of Rule 1, and afterwards for Rule 2.
    \vspace{4px}
    
    \noindent\textbf{Rule 1.} W.l.o.g.~let $\grid \in \colors_0^{m \times n}$ be a coloring with $\row{\grid}{i} = \rowempty$ for some $i \in [m]$. Furthermore, let $\oextension \in \oextensions{\grid}$.
    We want to show that there exists a minimum corner extension $\mmatrix{F^*} \in \oextensions{\rowremove{\grid}{i}}$ with $\corners{\mmatrix{F^*}} = \corners{\oextension}$. 
    
    First, $\corners{\mmatrix{F^*}} \le \corners{\rowremove{\oextension}{i}} \le \corners{\oextension}$ must clearly hold due to Lemma~\ref{lem:remove-row-not-increase-corners} and the fact that $\mmatrix{F^*}$ is a minimum corner extension. 
    
    To prove that $\corners{\mmatrix{F^*}} \geq \corners{\oextension}$, suppose $\corners{\mmatrix{F^*}} < \corners{\oextension}$ for contradiction. Let $\mmatrix{g^P} = \row{\mmatrix{\left(F^{*P}\right)}}{i}$.
    We use $\mmatrix{F^*}$ to construct a valid extension $\extension\in \extensions{\grid}$ such that $\extension = \rowadd{\mmatrix{F^*}}{i}{\mmatrix{g}}$.
    By Property~\ref{prop:no-corners-between-two-identical-rows}, we observe that $\corners{\extension} = \corners{\mmatrix{F^*}}$, since the number of corners between $\row{\extension^P}{i}$ and $\row{\extension^P}{i+1}$ must be $0$, while all the corners between all other rows remain the same. However, this implies that $\corners{\extension} < \corners{\oextension}$, which is a contradiction since $\oextension$ is a minimum corner extension. Thus $\corners{\mmatrix{F^*}} < \corners{\oextension}$ cannot hold. Consequently, $\corners{\mmatrix{F^*}} = \corners{\oextension}$. Thus, Rule 1 must be safe.
    \vspace{8px}

    \noindent\textbf{Rule 2.} W.l.o.g.~let $\grid \in \colors_0^{m \times n}$ of $\colors = [k]$ be a coloring with $\cell{\grid}{i}{j} \in \{0, c\}$ and $\cell{\grid}{i+1}{j} \in \{0, c\}$ for all $j \in [n]$ for some $i \in [m-1]$ and $c \in \colors$. In other words, rows $i$ and $i+1$ of coloring $\grid$ only use the color white and some other color $c$. Let $\grid'$ be equivalent to $\grid$ with rows $i$ and $i+1$ merged, i.e.
     \begin{align*}
         \grid' = \begin{pmatrix}
             \row{\grid}{[1,i-1]}\\
             \rowmerge{\row{\grid}{i}}{\row{\grid}{i+1}}\\
             \row{\grid}{[i+2,m]}
         \end{pmatrix}.
     \end{align*}
    Furthermore, let $\oextension \in \oextensions{\grid}$. We want to show that there exists a minimum corner extension $\mmatrix{F^*} \in \oextensions{\grid'}$ with $\corners{\mmatrix{F^*}} = \corners{\oextension}$. First, we construct a valid extension $\mmatrix{F} \in \extensions{\grid'}$ as follows. 
    \begin{align*}
         \mmatrix{F} = \begin{pmatrix}
             \row{\oextension}{[1,i-1]}\\
             \row{\oextension}{i} \otimes^c \row{\oextension}{i+1}\\
             \row{\oextension}{[i+2,m]}
         \end{pmatrix} \text{ with } (\mmatrix{g} \otimes^c \mmatrix{h})_i = \begin{cases}
            \row{\mmatrix{g}}{i} &\text{if } \row{\mmatrix{g}}{i} = \row{\mmatrix{h}}{i}\\
             c &\text{if } \row{\mmatrix{g}}{i} = c \lor \row{\mmatrix{h}}{i} = c\\
             0 &\text{otherwise}
         \end{cases}
     \end{align*}
     Assume for now that $\corners{\mmatrix{F}} \le \corners{\oextension}$. We will show that this inequality holds at the end of the proof. Since $\corners{\mmatrix{F^*}} \le \corners{\mmatrix{F}}$, we can conclude that $\corners{\mmatrix{F^*}} \le \corners{\oextension}$. Suppose $\corners{\mmatrix{F^*}} < \corners{\oextension}$. Let $\mmatrix{g} \in \row{\mmatrix{\left(F^*\right)}}{i}$. We use $\mmatrix{F^*}$ to construct a valid extension $\extension$ of \grid s.t.~$\extension = \rowadd{\mmatrix{F^*}}{i}{\mmatrix{g}}$. By Property~\ref{prop:no-corners-between-two-identical-rows}, we observe that $\corners{\extension} = \corners{\mmatrix{F^*}}$, since the number of corners between $\row{\extension}{i}$ and $\row{\extension}{i+1}$ must be $0$, while all the corners between all other rows remain the same. However, this implies that $\corners{\extension} < \corners{\oextension}$, which is a contradiction since $\oextension$ is a minimum corner extension. Thus $\corners{\mmatrix{F^*}} < \corners{\oextension}$ cannot hold. Consequently, $\corners{\mmatrix{F^*}} = \corners{\oextension}$.
     
     What remains to be shown is that $\corners{\mmatrix{F}} \le \corners{\oextension}$. We use a similar argument as for Lemma~\ref{lem:remove-row-not-increase-corners} where we consider only the corners between the rows $i-1$ to $i+2$ in $\extension^{*P}$ and rows $i-1$ to $i+1$ in $\mmatrix{F}^P$, and two arbitrary consecutive columns $x$ and $y$. To make it easier to identify the rows, let
     \begin{align*}
        \row{\extension^{*P}}{[i-1, i+2]} = \begin{pmatrix}
            \mmatrix{f} \\
            \mmatrix{g} \\ 
            \mmatrix{h} \\
            \mmatrix{\ell}
        \end{pmatrix} \text { and }
        \row{\mmatrix{F}^P}{[i-1, i+1]} = \begin{pmatrix}
            \mmatrix{f} \\
            \mmatrix{q} \\ 
            \mmatrix{\ell}
        \end{pmatrix}
        \text{.}
     \end{align*}
      However, we will have to differentiate between color $c$ and any other color of $\colors$. If we are counting the corners for color $c$, we observe that for every $j \in [0, n+1]$
      \begin{align*}
          \mmatrix{q}^{=c}_j = \mmatrix{g}^{=c}_j + \mmatrix{h}^{=c}_j - \mmatrix{g}^{=c}_j \mmatrix{h}^{=c}_j.
      \end{align*}
      If we are instead counting the corners for any other color~$z$, we observe that 
      \begin{align*}
          \mmatrix{q}^{=z}_j = \mmatrix{g}^{=z}_j \mmatrix{h}^{=z}_j
      \end{align*}
      for every $j \in [0, n+1]$. Let us now utilize the corner counting function for $2\times2$-regions to compare the corner count between $\mmatrix{F}$ and $\oextension$. For each color $z\in\colors$, we want to show that
    \begin{align*}
        \delta_z\begin{pmatrix}
            \row{\mmatrix{f}}{[x,y]} \\
            \row{\mmatrix{q}}{[x,y]}
        \end{pmatrix}
        + \delta_z\begin{pmatrix}
            \row{\mmatrix{q}}{[x,y]} \\ 
            \row{\mmatrix{\ell}}{[x,y]}
        \end{pmatrix}
        \le \delta_z\begin{pmatrix}
            \row{\mmatrix{f}}{[x,y]} \\
            \row{\mmatrix{g}}{[x,y]}
        \end{pmatrix} 
        + \delta_z\begin{pmatrix}
            \row{\mmatrix{g}}{[x,y]} \\
            \row{\mmatrix{h}}{[x,y]}
        \end{pmatrix}
        + \delta_z\begin{pmatrix}
            \row{\mmatrix{h}}{[x,y]} \\
            \row{\mmatrix{\ell}}{[x,y]}
        \end{pmatrix}
    \end{align*}
    which is equivalent to
    \begin{align*}
        &|\mmatrix{f}_x^{=z} + \mmatrix{q}_y^{=z} - \mmatrix{f}_y^{=z} - \mmatrix{q}_x^{=z}| + |\mmatrix{q}_x^{=z} + \mmatrix{\ell}_y^{=z} - \mmatrix{q}_y^{=z} - \mmatrix{\ell}_x^{=z}| \le \\
        &|\mmatrix{f}_x^{=z} + \mmatrix{g}_y^{=z} - \mmatrix{f}_y^{=z} - \mmatrix{g}_x^{=z}| + |\mmatrix{g}_x^{=z} + \mmatrix{h}_y^{=z} - \mmatrix{g}_y^{=z} - \mmatrix{h}_x^{=z}| + |\mmatrix{h}_x^{=z} + \mmatrix{\ell}_y^{=z} - \mmatrix{h}_y^{=z} - \mmatrix{\ell}_x^{=z}|\text{.}
    \end{align*}
    In case $z = c$, we observe that following inequality holds by iterating over all possible assignments of the variables:
    \begin{align*}
        &|\mmatrix{f}_x^{=z} + (\mmatrix{g}^{=c}_y + \mmatrix{h}^{=c}_y - \mmatrix{g}^{=c}_y \mmatrix{h}^{=c}_y) - \mmatrix{f}_y^{=z} - (\mmatrix{g}^{=c}_x + \mmatrix{h}^{=c}_x - \mmatrix{g}^{=c}_x \mmatrix{h}^{=c}_x)| \ +\\
        &|(\mmatrix{g}^{=c}_x + \mmatrix{h}^{=c}_x - \mmatrix{g}^{=c}_x \mmatrix{h}^{=c}_x) + \mmatrix{\ell}_y^{=z} - (\mmatrix{g}^{=c}_y + \mmatrix{h}^{=c}_y - \mmatrix{g}^{=c}_y \mmatrix{h}^{=c}_y) - \mmatrix{\ell}_x^{=z}| \le \\
        &|\mmatrix{f}_x^{=z} + \mmatrix{g}_y^{=z} - \mmatrix{f}_y^{=z} - \mmatrix{g}_x^{=z}| + |\mmatrix{g}_x^{=z} + \mmatrix{h}_y^{=z} - \mmatrix{g}_y^{=z} - \mmatrix{h}_x^{=z}| + |\mmatrix{h}_x^{=z} + \mmatrix{\ell}_y^{=z} - \mmatrix{h}_y^{=z} - \mmatrix{\ell}_x^{=z}|
    \end{align*}
    In all other cases, we again observe that the following inequality holds by iterating over all possible assignments of the variables:
    \begin{align*}
        &|\mmatrix{f}_x^{=z} + \mmatrix{g}^{=c}_y \mmatrix{h}^{=c}_y - \mmatrix{f}_y^{=z} - \mmatrix{g}^{=c}_x \mmatrix{h}^{=c}_x| + |\mmatrix{g}^{=c}_x \mmatrix{h}^{=c}_x + \mmatrix{\ell}_y^{=z} - \mmatrix{g}^{=c}_y \mmatrix{h}^{=c}_y - \mmatrix{\ell}_x^{=z}| \le \\
        &|\mmatrix{f}_x^{=z} + \mmatrix{g}_y^{=z} - \mmatrix{f}_y^{=z} - \mmatrix{g}_x^{=z}| + |\mmatrix{g}_x^{=z} + \mmatrix{h}_y^{=z} - \mmatrix{g}_y^{=z} - \mmatrix{h}_x^{=z}| + |\mmatrix{h}_x^{=z} + \mmatrix{\ell}_y^{=z} - \mmatrix{h}_y^{=z} - \mmatrix{\ell}_x^{=z}|
    \end{align*}

    Since the corners between all other rows remained the same and the corners we have looked at were between two arbitrarily chosen consecutive columns, we can conclude that $\corners{\mmatrix{F}} \le \corners{\oextension}$ holds.\qed
\end{proof}
Each rule application takes polynomial time and can be applied at most a polynomial amount of times, since the number of rows or columns decreases. Thus, the entire kernelization procedure runs in polynomial time. 

We can conclude that the size of the kernel depends on the number of colored cells in $\grid$, since all rows must be colored after applying Rule~1. Lemma~\ref{lem:kernel} additionally shows that the number of cells of the most frequent color can be neglected when examining at the kernel size. For colorings using only two colors, this implies the size depends only on the number of colored cells of one color.

\begin{restatable}{lemma}{kernelLemma}
    \label{lem:kernel}
    Let $\grid \in \colors_0^{m \times n}$ be a coloring with $c^* \in \colors$ being the color that has the most rows and columns where it occurs as a singular color (besides white). Let $\colors' = \colors \setminus \{c^*\}$.
    Exhaustively applying Rules 1 and 2 on \grid, with $r$ $\colors'$-colored cells, results in a kernel of size at most $\bigO{r} \times \bigO{r}$.
\end{restatable}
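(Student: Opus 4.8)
The plan is to bound the number of rows and the number of columns of the kernel $\grid'$ \emph{separately}, each by $2r+1 = \bigO{r}$, and then combine the two bounds. I would work entirely with the two invariants that hold once the rules have been applied exhaustively: (i) no row or column is empty, since otherwise Rule~1 would still apply; and (ii) no two consecutive rows (or columns) are both restricted to a single color $c$ together with white, since otherwise Rule~2 would still apply. I would argue the bound for rows; the bound for columns follows from the analogous (column) versions of these invariants.

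First, I would classify every row of $\grid'$ as \emph{special} if it contains at least one cell colored with a color in $\colors' = \colors \setminus \{c^*\}$, and \emph{pure} otherwise. By invariant (i) every row is non-empty, so a pure row must contain at least one $c^*$-colored cell and otherwise only white cells; that is, a pure row is exactly a row restricted to the single color $c^*$ and white. The number of special rows is at most $r$, because each special row uses up at least one of the $\colors'$-colored cells and distinct rows use distinct cells. For the pure rows I would invoke invariant (ii): any two consecutive pure rows both use only $c^*$ and white with the \emph{same} color $c^*$, so Rule~2 would merge them; hence no two pure rows are adjacent in $\grid'$. Listing the pure rows from top to bottom, each gap between two consecutive pure rows must contain at least one special row, and these witnessing special rows are pairwise distinct, so the number of pure rows exceeds the number of special rows by at most one. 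This gives at most $r+1$ pure rows and hence at most $2r+1$ rows in total; the symmetric argument bounds the columns by $2r+1$, so the kernel has size at most $\bigO{r} \times \bigO{r}$.

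The point that needs care, and the main obstacle I expect, is that $r$ counts the $\colors'$-colored cells of the \emph{original} grid $\grid$, whereas the counting above is carried out on the \emph{final} kernel $\grid'$. I would close this gap by observing that neither rule ever increases the number of $\colors'$-colored cells: Rule~1 removes only empty lines, which contain no colored cells, and each Rule~2 merge combines two lines that share a single color $c$ and white, so per column it either keeps an existing $c$-cell or fuses two $c$-cells into one, never producing a new colored cell. Consequently the number of $\colors'$-cells in $\grid'$ is at most $r$, and the bound on special rows (and special columns) obtained from $\grid'$ is at most $r$, as required. I would also remark that the choice of $c^*$ as the most frequently occurring singular color is used only to make $r$, and therefore the kernel, as small as possible; for the asymptotic $\bigO{r} \times \bigO{r}$ bound any fixed color $c^*$ would suffice.
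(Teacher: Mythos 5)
Your proof is correct and follows essentially the same counting argument as the paper: rows/columns containing a $\colors'$-colored cell number at most $r$, while the remaining rows/columns (using only $c^*$ and white) can never be adjacent once Rule~2 is exhausted, giving at most $2r+1$ lines in each dimension. Your phrasing via terminal invariants on the final kernel (rather than the paper's removal-and-reinsertion construction) and your explicit verification that neither rule increases the number of $\colors'$-colored cells add rigor and order-independence, but the underlying approach is the same.
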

\begin{proof}
    We construct a coloring $\grid' \in \colors_0^{m' \times n'}$ from \grid by first exhaustively applying Rule 1 and then removing all those rows and columns where $c^*$ appears as a singular color. Consequently, each row and column must have at least one $\colors'$-colored cell. Assuming we have at most $r$ $\colors'$-colored cells in $\grid'$, then the number of rows $m'$ and the number of columns $n'$ of $\grid'$ must both be at most $r$.

    Let us now iteratively reinsert all those rows and columns where $c^*$ appears as a singular color. After inserting such a row, we try to apply Rule 2 on the reinserted row with its neighbors. Consequently, between every pair of consecutive rows/columns in $\grid'$ and before/after any row at the boundary, there may be at most one row/column where $c^*$ appears as a singular color. Thus, the resulting coloring will have at most $2r+1$ rows/columns in total. Therefore, the kernel of $\grid$ is at most of size $\bigO{r} \times \bigO{r}$.\qed
\end{proof}
By applying the exact DP-algorithm from Section~\ref{sec:exact-algorithm} to the obtained kernel, we show that \MinCornerComplexity is \fpt\ in the number of colored cells of~\grid.

\subsection{\textsf{XP} in the Solution Size}
\label{sec:xp}
We construct an \textsf{XP}-algorithm, which decides, for a given coloring $\grid \in \colors_0^{m \times n}$ with $\colors = [k]$, and an integer $\ell$ as parameter, whether there exists an extension $\extension \in \extensions{\grid}$ such that~$\corners{\extension} \le \ell$. The algorithm is a modification of the algorithm presented in Section~\ref{sec:exact-algorithm}. Utilizing Lemma~\ref{lem:corners-in-row-upper-bound-corners}, we generate only row extensions for each row of \grid, which, by themselves, will not admit more than $\ell$ corners.

\begin{restatable}{lemma}{boundedNumberOfExtensionsLemma}
    \label{lem:bounded-number-of-extensions-per-row}
    Let $\Gamma'(\mmatrix{g})$ be the set of all possible extensions of $\mmatrix{g} \in \colors_0^n$ of $\colors = [k]$ such that~$\corners{\mmatrix{h}} \le \ell$ for each $\mmatrix{h} \in \Gamma'(\mmatrix{g})$. Then $|\Gamma'(\mmatrix{g})| \le (n(k+1))^{\bigO{\ell}}$.
\end{restatable}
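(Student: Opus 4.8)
The plan is to reduce the count of low-corner row extensions to an elementary combinatorial count of how an interval of length $n$ can be broken into monochromatic blocks. The key structural fact I would establish first is that, for a single row viewed as a $1 \times n$ coloring $\mmatrix{h}$, the total number of corners equals four times the number of its maximal non-white monochromatic blocks. Writing $B(\mmatrix{h})$ for that number of blocks, I claim $\corners{\mmatrix{h}} = 4\,B(\mmatrix{h})$, and I would verify it from the definition of $\Delta$ on the padded grid $\mmatrix{h}^P$, which is white above, below, and immediately left and right of $\mmatrix{h}$. Inside a block and inside a white run every $2\times2$-region has $\delta_c = 0$, whereas each block contributes exactly one $c$-corner at each of its two top boundary cells and two bottom boundary cells, i.e.\ exactly four corners per block, independent of whether neighbouring blocks are colour-adjacent. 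This I would check with the same $\delta_c$ case distinction already used earlier in the paper.

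Given this identity, any $\mmatrix{h} \in \Gamma'(\mmatrix{g})$ with $\corners{\mmatrix{h}} \le \ell$ satisfies $B(\mmatrix{h}) \le t := \lfloor \ell/4 \rfloor$. Hence it suffices to bound the number of colorings of a length-$n$ row having at most $t$ non-white monochromatic blocks; the extra requirement that $\mmatrix{h}$ respect the fixed cells of $\mmatrix{g}$ only shrinks this set, so I may drop it when proving an upper bound. Every such coloring is determined by its (ordered, pairwise disjoint) list of blocks, and each block is specified by a start index in $[n]$, an end index in $[n]$, and one of the $k$ colors, giving at most $n\cdot n\cdot k$ possibilities per block. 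Summing over the number $s \le t$ of blocks yields the crude over-count $\sum_{s=0}^{t}(n^2k)^s \le (t+1)\,(n^2k)^t$.

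It then only remains to massage this into the stated form. Using $n^2 k \le (n(k+1))^2$, $t \le \ell/4$, and $t+1 \le 2^{\ell} \le (n(k+1))^{\ell}$ (valid since $n(k+1)\ge 2$), the bound becomes $(t+1)(n^2k)^t \le (n(k+1))^{\ell}\,(n(k+1))^{\ell/2} = (n(k+1))^{\bigO{\ell}}$, as required.

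The only genuinely delicate step is the identity $\corners{\mmatrix{h}} = 4\,B(\mmatrix{h})$; once that is in place, the counting is routine. I expect the care to lie in the boundary bookkeeping of $\mmatrix{h}^P$, so that every corner of every block — including blocks touching the extreme left or right of the row — is counted exactly once, and in confirming that two colour-adjacent blocks each still receive a full four corners rather than sharing their common boundary $2\times2$-region.
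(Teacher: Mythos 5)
Your proposal is correct and takes essentially the same approach as the paper: both rest on the observation that a single row's corner count is exactly four times its number of maximal non-white monochromatic blocks (the paper states this as ``every non-white segment contributes exactly 4 corners''), and then bound the number of admissible rows by elementary counting. The only difference is bookkeeping---the paper counts segment-change positions and colors via $\binom{n-1}{\ell/2}(k+1)^{\ell/2+1}$, whereas you specify each block by its two endpoints and color, giving $(t+1)(n^2k)^t$---and both routes land at $(n(k+1))^{\bigO{\ell}}$.
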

\begin{proof}
    In the case that $n \le \ell$, we have at most $(k+1)^\ell$ possible extensions for $\mmatrix{g}$ regardless of the number of corners, which is clearly less than $(n(k+1))^{\bigO{\ell}}$. Thus, for the remainder of the proof, we assume that $n > \ell$.
    
    For every $\mmatrix{h} \in \Gamma'(\mmatrix{g})$, we can partition $\mmatrix{h}$ into consecutive segments $\lambda(\mmatrix{h}) = \left\langle[j_1(=1), j_2 - 1], [j_2, j_3-1], \dots, [j_{s},n] \right\rangle$, s.t.~for each segment $[j_i, j_{i+1}-1]$ in $\lambda(\mmatrix{h})$
    \begin{itemize}
        \item $\row{\mmatrix{h}}{j_i} = \row{\mmatrix{h}}{j}$ for each $j \in [j_i, j_{i+1}-1]$
        \item if $1 < j_i$, then $\row{\mmatrix{h}}{j_i-1} \neq \row{\mmatrix{h}}{j_i}$
        \item  if $j_{i+1}-1 < n$, then $\row{\mmatrix{h}}{j_{i+1}-1} \neq \row{\mmatrix{h}}{j_{i+1}}$
    \end{itemize}
    In other words, each consecutive segment in $\lambda(\mmatrix{h})$ is of a single color and any two consecutive segments do not have the same color. By this definition, there exists only one unique segmentation for every row coloring. We observe that every non-white segment contributes exactly $4$ corners. Thus, we may have at most $\frac{\ell}{4}$ colored segments. In the worst case, each colored segment is preceded and succeeded by a white segment, resulting in an upper bound of $\frac{\ell}{2} + 1$ segments in $\lambda(\mmatrix{h})$. Thus, $|\lambda(\mmatrix{h})| \le  \frac{\ell}{2} + 1$ for every $\mmatrix{h} \in \Gamma'(\mmatrix{g})$.
    
    Since there may be at most $\frac{\ell}{2} + 1$ segments, there can be at most $\frac{\ell}{2}$ segment changes. Furthermore, there are a maximum of $n-1$ possible positions between cells in $\mmatrix{g}$ for these segment changes to choose from. Neglecting the fact that consecutive segments may not have the same color, we can have at most $|\colors_0|^{\frac{\ell}{2} + 1} = (k+1)^{\frac{\ell}{2} + 1}$ color assignments for $\frac{\ell}{2} + 1$ segments. With this weakened form of segmentation, we can exclusively look at segmentations with $\frac{\ell}{2} + 1$ segments, since in any row extension with fewer segments we can simply split some segments until we have exactly $\frac{\ell}{2} + 1$ segments. This results in the following upper bound on the number of row extensions for $\mmatrix{g}$ which have at most $\ell$ corners.
    \begin{align*}
        |\Gamma'(\mmatrix{g})| \le \binom{n-1}{\frac{\ell}{2}} (k+1)^{\frac{\ell}{2} + 1}
    \end{align*}
    Thus, $|\Gamma'(\mmatrix{g})| \le (n(k+1))^{\bigO{\ell}}$ must hold.\qed
\end{proof}
This leads to the following running time and space requirement.

\begin{restatable}{lemma}{xpLemma}
    \label{lem:xp}
    Deciding whether $\grid \in \colors_0^{m \times n}$ of $\colors = [k]$ admits an extension~$\extension$ with $\corners{\extension} \le \ell$ can be done in $(n(k+1))^{\bigO{\ell}}m$ time using $(n(k+1))^{\bigO{\ell}}$ space.
\end{restatable}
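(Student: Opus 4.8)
The plan is to reuse the dynamic programming table $E$ from Section~\ref{sec:exact-algorithm} almost verbatim, but to restrict, for every row $i \in [m]$, the set of row extensions over which we range from the full set $\extensions{\row{\grid}{i}}$ (of size up to $(k+1)^n$) to the smaller set $\Gamma'(\row{\grid}{i})$ of Lemma~\ref{lem:bounded-number-of-extensions-per-row}, i.e.,~those extensions of $\row{\grid}{i}$ that on their own have at most $\ell$ corners. Concretely, in the recurrence for $E(\mmatrix{g}, i)$ we only create entries for $\mmatrix{g} \in \Gamma'(\row{\grid}{i})$ and only minimize over $\mmatrix{h} \in \Gamma'(\row{\grid}{i-1})$; the base case and the final value $E(\rowempty, m+1)$ are unchanged, noting that $\rowempty$ has $0 \le \ell$ corners and hence always lies in the restricted set. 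The algorithm then answers ``yes'' exactly when $E(\rowempty, m+1) \le \ell$.

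For correctness I would argue that this restriction preserves the answer to the decision problem. First suppose some extension $\extension \in \extensions{\grid}$ satisfies $\corners{\extension} \le \ell$. By Lemma~\ref{lem:corners-in-row-upper-bound-corners}, every single row satisfies $\corners{\row{\extension}{i}} \le \corners{\extension} \le \ell$, so each $\row{\extension}{i}$ lies in $\Gamma'(\row{\grid}{i})$ and, in particular, each restricted set is non-empty. Hence $\extension$ is still reachable in the restricted table, and by the same inductive argument as in Lemma~\ref{lem:dp-algo} applied to the restricted search space, the restricted $E(\rowempty, m+1)$ is at most $\corners{\extension} \le \ell$. Conversely, every coloring that the restricted table considers is a genuine valid extension of $\grid$, so whenever the restricted $E(\rowempty, m+1) \le \ell$ there truly exists an extension with at most $\ell$ corners (and if some $\Gamma'(\row{\grid}{i})$ is empty, the table propagates $\infty$ and correctly reports ``no''). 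Thus the restricted DP decides the problem correctly.

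For the running time I would enumerate $\Gamma'(\row{\grid}{i})$ explicitly for each row, by choosing the at most $\frac{\ell}{2}+1$ segment boundaries together with their colors exactly as in the proof of Lemma~\ref{lem:bounded-number-of-extensions-per-row}; this produces $(n(k+1))^{\bigO{\ell}}$ candidate rows per row index within the same time bound. Computing one table entry $E(\mmatrix{g}, i)$ ranges over the $(n(k+1))^{\bigO{\ell}}$ rows in $\Gamma'(\row{\grid}{i-1})$ and evaluates $\cornersrow{\mmatrix{h}}{\mmatrix{g}}$ in $\bigO{n}$ time each, and there are $(n(k+1))^{\bigO{\ell}}$ entries to compute per row; since the factor $n$ is absorbed into $(n(k+1))^{\bigO{\ell}}$, one row costs $(n(k+1))^{\bigO{\ell}}$ time, and over all $m$ rows the total is $(n(k+1))^{\bigO{\ell}}m$. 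As in Lemma~\ref{lem:dp-algo-complexity}, the recurrence only references the previous row, so it suffices to keep the table entries of two consecutive rows in memory, giving $(n(k+1))^{\bigO{\ell}}$ space.

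The main obstacle to make rigorous is the correctness reduction rather than the bookkeeping: one must verify that restricting each layer of the DP to $\Gamma'$ cannot cut off a global optimum of value at most $\ell$, and that the restricted minimum still coincides with the true optimum in that regime. Both follow from Lemma~\ref{lem:corners-in-row-upper-bound-corners}, but it is worth stating explicitly that the restricted table computes the minimum only over extensions all of whose rows are cheap, so its value is an upper bound on $\corners{\oextension}$ that is tight exactly when $\corners{\oextension} \le \ell$, which is precisely the regime the decision procedure needs to distinguish.
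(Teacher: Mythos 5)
Your proposal is correct and follows essentially the same route as the paper's own proof: both modify the exact DP of Section~\ref{sec:exact-algorithm} by restricting each row's candidate set to extensions with at most $\ell$ corners (justified via Lemma~\ref{lem:corners-in-row-upper-bound-corners} and counted via Lemma~\ref{lem:bounded-number-of-extensions-per-row}), propagate $\infty$ when a row has no feasible extension, compare against $\ell$ at the end, and obtain the identical time and space bounds by keeping only two rows of the table. Your write-up is in fact somewhat more explicit than the paper's on the two directions of correctness, but there is no substantive difference in approach.
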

\begin{proof}
    We modify the algorithm presented in Section~\ref{sec:exact-algorithm} by limiting the number of extensions that need to be generated for each row. By Lemma~\ref{lem:corners-in-row-upper-bound-corners}, each row may admit at most $\ell$ corners on its own, otherwise the entire coloring will have more than $\ell$ corners. Therefore, in accordance with Lemma~\ref{lem:bounded-number-of-extensions-per-row}, we need to generate at most $(n(k+1))^{\bigO{\ell}}$ row extensions per row in our modified dynamic programming algorithm. Note that it may occur that some row of $\grid$ may not have any feasible extensions. We modify the $\min$-function in the recurrence of the original DP algorithm to return $\infty$, if the previous row has no extensions. Finally, since the original algorithm is designed for the optimization problem, we simply check whether $F(\mmatrix{0}, m+1) \le \ell$ at the end to solve the decision problem.
        
    For each entry in the dynamic programming table of some row $i$ and some row coloring $\mmatrix{h}$, we compare $\mmatrix{h}$ with all feasible row extensions of the previous row $i-1$. Since this comparison takes only $\bigO{n}$ time, computing an entry can be accomplished in $(n(k+1))^{\bigO{\ell}}$ time. Since we have at most $(n(k+1))^{\bigO{\ell}}m$ entries in total, computing all entries takes $(n(k+1))^{\bigO{\ell}}m$ time. 

    Regarding space, we again only need to store the entries for extensions of the last two rows processed. Since the size of a coloring is at most $\bigO{n}$ and we have $(n(k+1))^{\bigO{\ell}}$ entries per row, we need $(n(k+1))^{\bigO{\ell}}$ storage in total.\qed
\end{proof}
This solves the problem in \textsf{XP}-time: $(nm)^{\lambda(\ell)}$ for some computable function~$\lambda$.

\section{Conclusion}\label{sec:conclusion}
We studied the combinatorial properties of grid-based hypergraph visualizations with disjoint polygons, by trying to minimizing the visual complexity.
We assumed as input an assignment of at most one hyperedge per grid cell, which differs from the standard mapping between set elements and grid cells in~\cite{DBLP:conf/gd/GoethemKKMSW17}. We leave finding such an assignment, that minimizes the number of polygon corners, as an open problem. Furthermore, when representing a hyperedge by multiple polygons, a natural optimization goal is to minimize the number of polygons per hyperedge. While minimizing shape complexity may incidentally result in few polygons, the complexity of minimizing the number of polygons remains open.

%
\subsubsection{Acknowledgements} 
The authors would like to thank anonymous referees for their careful reviews and pointing us to~\cite{DBLP:journals/dam/DarmannD21}.
\bibliographystyle{splncs04}
\bibliography{references}

\newpage
\appendix

\section{Integer Linear Program (ILP)}
\label{apx:ilp}
Let $\grid \in \colors_0^{m \times n}$ with $\colors = [k]$. We define an ILP for the optimization problem \MinCornerComplexity with the goal of finding a minimum corner extension $\oextension\in\oextensions{\grid}$. The formulation is based on the straightforward definition of the corner counting function $\corners{\grid}$. This function considers all $2 \times 2$ regions in $\grid^P$ and counts the corners for each color $c \in \colors$ at the center of these regions, identified by the function $\delta_c$.

We define following variables.
\begin{itemize}
    \item $x^c_{i,j} \in \{0,1\}$ --- the cell $(i,j)$ of $\grid^P$ has color $c \in \colors$
    \item $y^c_{i,j} \in \mathbb{Z}_0^+$ --- the number of corners of color $c \in \colors$ in $2 \times 2$ region around $(i,j) \in [0,m] \times [0,n]$, i.e.~$\delta_c\left(\submmatrix{\grid^P}{[i, i+ 1]}{[j, j+1]}\right)$
\end{itemize}
Furthermore, to ease the presentation of some of the constraints, we define the following auxiliary functions $\lambda: \colors \times [0, m] \times [0,n] \to [-2, 2]$ and $\gamma: \mathbb{Z}^2 \to \{0,1\}$.
\begin{align*}
    \lambda(c, i, j) &= x^c_{i,j} + x^c_{i+1,j+1} - x^c_{i,j+1} - x^c_{i+1,j}\\
    \gamma(i,j) &= \begin{cases}
        1 & \text{ if } (i,j) \in [m] \times [n]\\
        0 & \text{ otherwise }
    \end{cases}
\end{align*}
We optimize the function
\begin{align*}
    \min \sum_{c \in \colors}\sum_{i \in [0,m]}\sum_{i \in [0,n]} y^c_{i,j}
\end{align*}
subject to following constraints.
\begin{align}
    \sum_{c \in \colors} x^c_{i,j} &\le \gamma(i,j) & \forall (i,j) \in [0,m+1] \times [0,m+1]
    \label{con:each-cell-one-color}
    \\
    \lambda(c, i, j) &\le y^c_{i,j} & \forall (c,i,j) \in \colors \times [0,m] \times [0,n]
    \label{con:corners-two-by-two-positive}
    \\
    -\lambda(c, i, j) &\le y^c_{i,j} & \forall (c,i,j) \in \colors \times [0,m] \times [0,n]
    \label{con:corners-two-by-two-negative}
    \\
    x^c_{i,j} &\ge \grid_{i,j}^{=c} & \forall (c,i,j) \in \colors \times [m] \times [n]
    \label{con:pre-colored-cells}
    \\
    x^c_{i,j} &\in \{0,1\} & \forall (c,i,j) \in \colors \times [0,m+1] \times [0,n+1]
    \label{con:domain-x}
    \\
    y^c_{i,j} &\in \mathbb{Z}_0^+ & \forall (c,i,j) \in \colors \times [0,m] \times [0,n]
    \label{con:domain-y}
\end{align}
The above constraints have following semantics.
\begin{description}
    \item[Constraint \ref{con:each-cell-one-color}:] With this constraint, each cell within $\grid$ may admit at most one color of $\colors$. If it is assigned no color, we assume that the cell is left white. All cells on the boundary of $\grid^P$ (all cells of $\grid^P$ without $\grid$), however, must be kept white and cannot be assigned a color.
    
    \item[Constraints \ref{con:corners-two-by-two-positive} \& \ref{con:corners-two-by-two-negative}]: These two constraints can be equivalently expressed as $|x^c_{i,j} + x^c_{i+1,j+1} - x^c_{i,j+1} - x^c_{i+1,j}| \le y^c_{i,j}$. This inequality ensures that $y^c_{i,j}$ is greater than or equal to $\delta_c\left(\submmatrix{\grid^P}{[i, i+ 1]}{[j, j+1]}\right)$. It is worth noting that due to our optimization function with the goal to minimize the total number of corners, $y^c_{i,j}$ will never be greater than the actual number of $c$-colored corners in that $2\times2$-region in an optimal solution.

    \item[Constraint \ref{con:pre-colored-cells}:] This constraint ensures that cells of color $c\in \colors$ in $\grid$ also admit this color in the ILP-formulation, ensuring that the extension is valid in regard to $\grid$.

    \item[Constraint \ref{con:domain-x} \& \ref{con:domain-y}:] These constraints restrict the domain of the variables.
\end{description}

The objective value of some optimal solution of this formulation corresponds to the number of corners of some minimum corner extension $\oextension \in \oextensions{\grid}$. The corresponding extension can easily be extracted from the variables $x^c_{i,j}$. 

We observe that the ILP formulation has $\bigO{knm}$ variables and constraints. 
\end{document}